\tikzstyle{vertex}=[circle,fill=black,minimum size=3pt,inner sep=0pt]
\tikzstyle{bigvertex}=[circle,draw,thick,fill=black!5,minimum size=16pt,inner
\newcommand{\partto}{\rightharpoonup}
\newcommand{\partinj}{\rightleftharpoons}
\newcommand{\powerset}{\mathcal{P}}
\newcommand{\naturalN}{\mathbb{N}}
\newcommand{\GrH}{\mathrel{\mathcal{H}}}
\newcommand{\GrL}{\mathrel{\mathcal{L}}}
\newcommand{\GrR}{\mathrel{\mathcal{R}}}
\newcommand{\ttreverse}{\mathtt{reverse}}
\newtheorem{theorem}{Theorem}[section]
\newtheorem{lemma}[theorem]{Lemma}
\newtheorem{corollary}[theorem]{Corollary}
\theoremstyle{definition}
\newtheorem{definition}[theorem]{Definition}
\newtheorem{example}[theorem]{Example}
\newtheorem{remark}[theorem]{Remark}
\newcommand\bumpers{{\triangleright\triangleleft}}
\newcommand\vareps\varepsilon
\newcommand\pol{\rho}
\newcommand\cA{\mathcal{A}}
\newcommand\cG{\mathcal{G}}
\newcommand\cT{\mathcal{T}}
\newcommand\transgraph{\cG}
\newcommand\compgraph{\cG}
\newcommand\tlmon{\mathbf{TL}}
\newcommand\behavmon{\mathbf{Beh}}
\newcommand\setreg{\mathrel{:=}}
\newcommand\deltalr{\delta_\rightarrow}
\newcommand\deltarl{\delta_\leftarrow}
\newcommand\deltall{\delta_\lefttorightarrow}
\newcommand\deltarr{\delta_\righttoleftarrow}
\author{{\fontencoding{T5}\selectfont Lê Thành Dũng (Tito) Nguyễn}\affiliationmark{1}\thanks{Supported by the LABEX MILYON (ANR-10-LABX-0070) of Université de Lyon, within the program \enquote{Investissements d'Avenir} (ANR-11-IDEX-0007) operated by the French National Research Agency (ANR).}
\and
{Camille Noûs}\affiliationmark{2}
  \and Cécilia Pradic\affiliationmark{3}}
\title{Two-way automata and transducers with planar behaviours are aperiodic}
\affiliation{
  Laboratoire de l'informatique du parallélisme (LIP), École normale supérieure de Lyon, France\\
  Laboratoire Cogitamus\\
  Department of Computer Science, Swansea University, Wales}
\keywords{aperiodic regular functions, planar diagrams}
\begin{document}
\publicationdetails{VOL}{2023}{ISS}{NUM}{SUBM}

\maketitle

\begin{abstract}
  We consider a notion of planarity for two-way finite automata and transducers,
  inspired by Temperley-Lieb monoids of planar diagrams.
  We show that this restriction captures star-free languages and
  first-order transductions.
\end{abstract}

\section{Introduction}

This paper considers a notion of \emph{planarity} -- in the topological sense --
for two-way deterministic finite automata. To the best of our knowledge, little
has been said about it; \citet{HinesPlanar} suggested the definition in a
workshop talk, with motivations coming from outside automata theory, but he did
not investigate it further. Unlike the notion of \enquote{planar automaton} that
seems to occur more frequently in the literature,
e.g.~\citep{inhplanaraut,BonfanteD19}, ours is concerned not with
the state transition chart, but with another graphical representation of the
behaviour of a two-way automaton.

\begin{example}\label{ex:intro}
  The following diagram corresponds to a two-way automaton -- defined in
  \Cref{ex:2dfaplanar} -- running on an input containing a factor $abca$. Assuming
  that the automaton arrives on this factor from the left in state
  $0^\rightarrow$, its run will continue as indicated by the path in red,
  exiting on the right in state $3^\rightarrow$.
  \begin{center}
    \begin{tikzpicture}[scale=0.75]
      \draw[thick] (-1,1) -- (9,1);
      \draw[thick] (9,0.4) -- (-1,0.4);
      \draw[thick] (0,0.4) -- (0,1);
      \draw[thick] (2,0.4) -- (2,1);
      \draw[thick] (4,0.4) -- (4,1);
      \draw[thick] (6,0.4) -- (6,1);
      \draw[thick] (8,0.4) -- (8,1);
      \node at (-0.5,0.7) {$\cdots$};
      \node at (1,0.7) {$a$};
      \node at (3,0.7) {$b$};
      \node at (5,0.7) {$a$};
      \node at (7,0.7) {$c$};
      \node at (8.5,0.7) {$\cdots$};
      \node (q01) at (0,4.5) {$0^\rightarrow$};
      \node (q11) at (0,3.5) {$1^\leftarrow$};
      \node (q21) at (0,2.5) {$2^\leftarrow$};
      \node (q31) at (0,1.5) {$3^\rightarrow$};
      \node (q02) at (2,4.5) {$0^\rightarrow$};
      \node (q12) at (2,3.5) {$1^\leftarrow$};
      \node (q22) at (2,2.5) {$2^\leftarrow$};
      \node (q32) at (2,1.5) {$3^\rightarrow$};
      {\draw[->, thick, color=red] (q01) -- (q02);}
      \draw[->] (q12) -- (q21);

      \node (q03) at (4,4.5) {$0^\rightarrow$};
      \node (q13) at (4,3.5) {$1^\leftarrow$};
      \node (q23) at (4,2.5) {$2^\leftarrow$};
      \node (q33) at (4,1.5) {$3^\rightarrow$};
      \draw[->] (q13) -- (q22);
      {\draw[->, thick, color=red] (q02) -- (q03);
        \draw[->, thick, color=red] (q23.west) to [bend right=40] (q33.north west);}

      \node (q04) at (6,4.5) {$0^\rightarrow$};
      \node (q14) at (6,3.5) {$1^\leftarrow$};
      \node (q24) at (6,2.5) {$2^\leftarrow$};
      \node (q34) at (6,1.5) {$3^\rightarrow$};
      {\draw[->, thick, color=red] (q03) -- (q04);
        \draw[->, thick, color=red] (q14) -- (q23);}

      \node (q05) at (8,4.5) {$0^\rightarrow$};
      \node (q15) at (8,3.5) {$1^\leftarrow$};
      \node (q25) at (8,2.5) {$2^\leftarrow$};
      \node (q35) at (8,1.5) {$3^\rightarrow$};
      {\draw[->, thick, red] (q04.east) to [bend left=60] (q14.east);}
      \draw[->] (q15) -- (q24);

      \draw[->] (q31) -- (q32);
      \draw[->] (q32) -- (q33);
      \draw[->, thick, red] (q33) -- (q34);
      \draw[->, thick, red] (q34) -- (q35);
    \end{tikzpicture}
  \end{center}
\end{example}
By \enquote{planarity}, we refer to the lack of
crossings\footnote{There is an unfortunate clash of terminology here with the \emph{crossing sequences} from~\cite{shepherdson59}. Therefore, we shall avoid the word \enquote{crossing} altogether.} in such diagrams drawn on the plane. This is the case in the above example.

Our main result entails
that planar deterministic two-way automata recognize exactly the class of
\emph{star-free} languages.
We actually show a strengthening of this result on \emph{transducers} -- finite-state machines that compute string-to-string functions.
  \begin{theorem}
  \label{thm:main}
    Let $f\colon \Sigma^* \partto \Gamma^*$. The following are equivalent:
    \begin{itemize}
    \item $f$ is a \emph{first-order transduction}.
    \item $f$ is computed by some \emph{planar deterministic} two-way finite transducer.
    \item $f$ is computed by some \emph{planar reversible}\footnote{A reversible machine is both deterministic and \enquote{co-deterministic}, as we will explain.} two-way finite transducer.
    \end{itemize}
  \end{theorem}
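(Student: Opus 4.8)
The plan is to establish the cyclic chain of implications: the first item implies the third, the third implies the second, and the second implies the first. The implication from \emph{planar reversible} to \emph{planar deterministic} is immediate, since every reversible transducer is in particular deterministic (as the footnote indicates); a planar reversible transducer is thus already a planar deterministic one, and nothing needs to be constructed.

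For the implication from \emph{planar deterministic} to \emph{first-order}, I would route everything through \emph{aperiodicity}. To a two-way transducer one associates its behaviour monoid $\behavmon$, whose elements record, for each possible entry point (side and state) into a word, the corresponding exit point together with the produced output, and in which concatenation of words corresponds to monoid multiplication. The crucial observation---this is where planarity enters, and it is the heart of this direction---is that when the behaviour diagrams are planar, the generated monoid is \emph{aperiodic}. The argument is topological: a nontrivial group element would, upon iteration, realize a cyclic permutation of the entry/exit wires, and composing planar diagrams can never produce such a rotation without forcing two wires to cross, contradicting planarity; this is the combinatorial content of the fact that Temperley--Lieb-style monoids of planar diagrams contain no nontrivial groups. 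Once aperiodicity of the transition structure is secured, I would invoke the known characterization that two-way transducers whose transition monoid is aperiodic compute exactly the first-order transductions.

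For the remaining implication, from \emph{first-order} to \emph{planar reversible}, I would proceed synthetically. Starting again from the aperiodicity characterization, a first-order transduction is realized by some aperiodic two-way transducer, and the task is to upgrade this to one that is simultaneously reversible (co-deterministic) and planar. I would first apply a reversibilisation procedure designed to preserve aperiodicity, and then argue that the resulting behaviours admit a crossing-free layout. Rather than hoping that an arbitrary aperiodic reversible transducer is already planar, I expect to need a normal form---guided by the diagrammatic presentation of Temperley--Lieb---in which each letter acts by a planar diagram, so that aperiodicity is witnessed \emph{precisely} by the planarity of the wiring.

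The main obstacle I anticipate lies in this last direction. Showing that planar behaviour forces aperiodicity is a clean topological argument, and the passage through the known \enquote{aperiodic equals first-order} theorem is by citation; but constructing, for every first-order transduction, a transducer that is at once reversible and planar requires reconciling two constraints that pull in different directions, and I expect the bulk of the technical work to consist in producing a diagrammatic normal form satisfying both simultaneously.
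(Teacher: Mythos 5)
Your overall architecture (a cycle of implications, with reversible $\Rightarrow$ deterministic being free) matches the paper, and your treatment of \emph{planar deterministic} $\Rightarrow$ \emph{first-order} is essentially the paper's: form the behaviour monoid, show planarity forces aperiodicity, cite the Carton--Dartois characterization. One caveat even here: your topological argument tacitly assumes that planar behaviours are closed under the monoid product, so that the \emph{generated} monoid consists of planar diagrams. This closure is false for planar non-deterministic behaviours (the paper exhibits a two-element counterexample) and holds for deterministic ones only by a separate gluing argument; without it, ``planar diagram monoids contain no nontrivial groups'' says nothing about the monoid your transducer actually generates. The paper then concludes not via group-freeness of Temperley--Lieb monoids but by factoring each element so as to reduce to the aperiodicity of partial monotone maps on a finite chain (your Green's-relations intuition appears there only as an alternative sketch). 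These points are repairable, so this direction is essentially right.

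The genuine gap is in \emph{first-order} $\Rightarrow$ \emph{planar reversible}. Your plan --- take an aperiodic 2DFT, reversibilise it while preserving aperiodicity, then exhibit a ``diagrammatic normal form'' in which the wiring is planar --- is not the paper's route and, as stated, is missing its one indispensable idea. There is no reason an aperiodic reversible transducer, however normalised, should admit a single total order on its states making \emph{every} letter's behaviour simultaneously crossing-free: planarity implies aperiodicity, but the converse fails machine-by-machine (already in the one-way case the paper cites Higgins's result that some aperiodic monoid divides no monoid of partial monotone maps, so planar one-way automata miss some star-free languages; two-wayness and a change of machine are genuinely needed). The way out is not a normal form for one machine but \emph{compositional synthesis}: prove that planar reversible two-way transducers are closed under composition (the known 2RFT composition construction happens to preserve planarity), decompose the given first-order transduction into elementary factors --- via the aperiodic Krohn--Rhodes theorem into two-state aperiodic sequential transducers, together with reversals and a monotone register transducer coming from an SST-based factorisation --- and implement each factor directly by a small planar reversible machine. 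Without the closure-under-composition lemma and this decomposition into simple factors, the normal form you anticipate does not exist to be found, and this direction of your argument does not go through.
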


  By forgetting the output information of a two-way transducer, one may obtain a two-way automaton recognizing the
  domain of the partial function computed by the transducer. Thus, we immediately get \Cref{cor:main} below for automata
  and languages. To make sense of this, keep in mind that first-order transductions are to regular functions -- those computed by deterministic two-way transducers (2DFTs), without the planarity condition -- what star-free languages are to regular languages \cite[see e.g.][p.~9]{MuschollPuppis}. (While the name \enquote{first-order} refers to logic, we will not use logic in this paper.)
  \begin{corollary}
  \label{cor:main}
    Let $L \subseteq \Sigma^*$. The following are equivalent:
    \begin{itemize}
    \item $L$ is a star-free language.
    \item $L$ is recognized by some \emph{planar deterministic} two-way finite automaton (planar 2DFA).
    \item $L$ is recognized by some \emph{planar reversible} two-way finite automaton.
    \end{itemize}
  \end{corollary}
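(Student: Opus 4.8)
The plan is to deduce \Cref{cor:main} from \Cref{thm:main} by encoding a language as a partial function, so that the three items of the corollary match the three items of the theorem. To this end I would attach to each $L\subseteq\Sigma^*$ its \emph{partial identity} $\mathrm{id}_L\colon\Sigma^*\partto\Sigma^*$, defined by $\mathrm{id}_L(w)=w$ when $w\in L$ and undefined otherwise, so that $\mathrm{dom}(\mathrm{id}_L)=L$. The language-level dictionary I would first record is the standard one: $L$ is star-free if and only if $\mathrm{id}_L$ is a first-order transduction. Indeed, star-free languages are exactly the first-order definable ones, and restricting the (trivially first-order) identity function to a first-order definable domain is a first-order transduction; conversely a first-order transduction has a first-order definable domain. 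This matches the first items of corollary and theorem.

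For the two machine items, the easy half is \emph{forgetting the output}. Given a planar deterministic (respectively reversible) two-way transducer computing $\mathrm{id}_L$, erasing all output annotations yields a two-way automaton whose set of accepted inputs is precisely $\mathrm{dom}(\mathrm{id}_L)=L$. Crucially, planarity is a property of the behaviour diagram, which is determined by the head-movement and state-transition structure alone and is therefore untouched by deleting outputs; likewise reversibility is a property of the (output-free) transition relation. Hence both properties transfer to the resulting automaton. Combined with the previous paragraph and \Cref{thm:main}, this already gives the implications \enquote{star-free $\Rightarrow$ planar reversible 2DFA} (through $\mathrm{id}_L$ first-order $\Rightarrow$ planar reversible transducer $\Rightarrow$ forget output), while \enquote{planar reversible 2DFA $\Rightarrow$ planar 2DFA} is immediate since reversible machines are in particular deterministic.

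It remains to close the cycle with \enquote{planar 2DFA $\Rightarrow$ star-free}, which requires the converse machine construction: turning a planar 2DFA $A$ for $L$ into a planar two-way transducer computing $\mathrm{id}_L$. Here I would first normalise $A$ so that it accepts only upon reaching the right endmarker in a distinguished accepting configuration, and then glue on an \emph{output routine} that, starting from that configuration, performs one right-to-left sweep followed by one left-to-right sweep emitting the input letters in order, before accepting. Applying \Cref{thm:main} to the resulting transducer yields that $\mathrm{id}_L$ is a first-order transduction, whence $L$ is star-free by the dictionary above. The main obstacle is precisely this augmentation: one must check that the appended sweeps can be drawn in the behaviour diagram as a \enquote{tail} hanging off the single accepting configuration without introducing any crossings, so that planarity is genuinely preserved (reversibility of the tail, being a plain back-and-forth traversal attached at one configuration, is comparatively routine and is in any case not needed for this direction). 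Everything else in the argument is bookkeeping around the partial-identity encoding.
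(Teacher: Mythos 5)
Your overall architecture is the same as the paper's: \Cref{cor:main} is deduced from \Cref{thm:main} by forgetting outputs, together with the standard dictionary between first-order transductions and star-free languages. Your first two implications are fine: star-free $\Rightarrow$ planar reversible automaton by taking a planar reversible transducer for $\mathrm{id}_L$ and erasing outputs (in the paper, planarity and reversibility of a transducer are \emph{defined} via the underlying automaton $\lfloor\cT\rfloor$ in \Cref{def:2nft}, so this transfer is definitional), and reversible $\Rightarrow$ deterministic is immediate.

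The gap is in your closing implication, planar 2DFA $\Rightarrow$ star-free. Committing to the encoding $L\mapsto\mathrm{id}_L$ forces you to graft an output routine (a backward sweep followed by a forward emitting sweep) onto the given automaton, and you correctly identify the planarity of this augmented machine as the crux --- but you do not verify it, and it is not routine as stated: the two new states must be inserted into the total order on $Q$, and their turnaround edges at $\triangleright$ and $\triangleleft$ must be checked against whatever (planar, but otherwise arbitrary) endmarker behaviour the original automaton already has; without first normalising those endmarker transitions the construction can break. The intended argument sidesteps all of this: a planar 2DFA \emph{is} already a planar deterministic 2DFT once every transition is given output $\varepsilon$; it computes the constant-$\varepsilon$ function with domain $L$, which is a first-order transduction by \Cref{thm:main} (or directly by \Cref{cor:main-planarisfo}, since the behaviour monoid embeds into $\tlmon(Q,\pol,<)$ and is therefore aperiodic), and domains of first-order transductions are star-free. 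Replacing the $\mathrm{id}_L$ encoding by the constant-$\varepsilon$ encoding in this one direction removes the only unproved step and makes your argument complete.
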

  
  \subparagraph*{Plan of the paper} After a discussion of motivations
  and related work, we introduce in \Cref{sec:prelim}
  the automata model and the notion of planarity under scrutiny. We
  then embark on the proof of \Cref{thm:main}, which contains two main
  thrusts corresponding to the equivalence between the first and last
  items.
  
  \Cref{sec:aperiodic} shows that functions
  computed by planar 2DFTs are first-order transductions. To do so,
  we appeal to a result of~\citet{cartondartoisaperiodic}
  stating that \emph{aperiodic} 2DFTs   are exactly as expressive as first-order
  transductions (extending Schützenberger's seminal theorem relating star-free languages and aperiodic monoids, whose history is recounted in \citep{StraubingSIGLOG}). In order to show that the monoid of behaviours of a planar 2DFT is necessarily
  aperiodic, we simply embed it into a generic monoid that depends
  only on the (ordered and directed) state space of the 2DFT rather than the fine
  details of its transition functions. The key feature that makes this so simple is that
  planarity is a property that is preserved by composition
  of transitions (whereas aperiodic elements of a monoid are, in general, not closed under product).
  
  Then, in \Cref{sec:krohnrhodes}, we show that the functions realizable by
  planar reversible two-way transducers (2RFTs) form a class that is closed under composition and contains some \enquote{simple} functions.
  We can then conclude by factoring any first-order transduction into a composition of such simple functions (this involves the decomposition theorem of \citet{KrohnRhodes}). Here the interesting
  aspect is that planarity of transitions is preserved by known methods~\citep{ReversibleTransducers}
  to compose 2RFTs.

\subsection{Related work}

\paragraph{Planar diagram monoids and geometry of interaction}
A device amounting to an \enquote{undirected} variant of our planar 2RFA was first proposed by \citet{HinesPlanar}.
He had previously connected~\citep{Hines} the monoids of 2DFA and 2RFA behaviours to a category-theoretic account of the so-called \emph{geometry of interaction} (GoI)\footnote{Further connections between categorical versions of the GoI and automata theory have been investigated by~\citet{Katsumata08} and by~\citet{MAHORS}. We may also note that \emph{inverse monoids} have been linked to both GoI~\citep{GoubaultLarrecqGoI} and two-way automata~\citep*{PreRational}.} semantics of linear logic. Then planarity comes up naturally when considering GoI for non-commutative linear logic, as noted for instance by~\citet{AbramskyTemperleyLieb} who also explains how the monoids of endomorphisms of his categorical semantics coincide with \citeauthor{Kauffman}'s \citeyearpar{Kauffman} diagram monoids. (The monoids of behaviours in \citep{HinesPlanar} are finite quotients\footnote{An element of a Kauffman monoid consists not only of a planar diagram, but also of a natural number of free-floating cycles, making the monoid countably infinite; the quotient throws these cycles away.} of Kauffman monoids, called \enquote{Jones monoids} or \enquote{Temperley-Lieb\footnote{The \emph{Temperley-Lieb algebras} mentioned in the titles of \citep{HinesPlanar,AbramskyTemperleyLieb} admit a presentation based on Kauffman monoids. They appear in knot theory and statistical physics.} monoids} \citep[see e.g.][]{PresentationsTL}.)

In an earlier work~\citep{aperiodic}, we exhibited a connection between non-commutative linear logic and star-free languages. It is thanks to this that we could guess what effect the planarity restriction would have on the computational power of two-way automata and transducers. Relatedly, \citet{Auinger14} has shown that Jones monoids generate precisely the pseudovariety of aperiodic monoids -- a result that is morally very close to our characterization of star-free languages by planar 2RFA, and whose proof uses the algebraic version of the Krohn--Rhodes decomposition.

\paragraph{Relationship with previous automata models}

As we already said in our opening words, our planar behaviours are entirely unrelated to the \emph{planar automata} of~\citet{inhplanaraut}. Their
notion of planarity relates to usual one-way machines when regarded as graphs, so
it includes, for instance, the minimal DFA of the non-star-free language $(aa)^*$.

That said, a special case of our planarity condition has appeared before in the literature on automata. Indeed, when instantiated to deterministic one-way automata, it becomes \emph{monotonicity}: all transitions are partial monotone functions with respect to a common linear order on states. The variant with total instead of partial transitions has been introduced and dubbed \enquote{monotonic automata} by \citet{SynchroMonotonic}\footnote{See for instance~\citep*[Section~7]{SynchroGame} for an overview of subsequent work on monotonic automata.} in the context of synchronization problems. The study of the monoid $\mathcal{O}_n$ of all partial monotone functions on the finite linear order $\{1 < 2 < \dots < n\}$ -- which contains the monoids of behaviours of (partial) monotonic automata, i.e.\ one-way planar automata, with $n$ states -- is actually even older. In particular, \citet{Higgins95} exhibits an aperiodic monoid that does not divide any $\mathcal{O}_n$ for $n\in\naturalN$, which entails that some star-free language cannot be recognized by any partial monotonic automaton. Thus, two-wayness is necessary to recognize all star-free languages with planar automata.

As for our definition of reversibility for automata, it does coincide with the widespread one in the literature. \citet*{ReversibleTransducers} were the first to study it in the context of two-way transducers, but there had been several earlier works proving that reversibility does not affect the power of two-way automata, and of their generalizations to trees and graphs -- for an overview, see~\citep[Introduction]{ReversibleGraphWalking}.
Finally, on a technical level, this paper is very much tied to the existing theory
of machine models for first-order transductions as developed in~\citep*{FOSST,cartondartoisaperiodic,AperiodicSST,ListFunctions}.

\section{Preliminaries}
\label{sec:prelim}

\subparagraph*{Notations} Alphabets $\Sigma,\Gamma$ consist of non-empty finite sets.
By convention, we assume that we have distinguished elements $\triangleright$ and $\triangleleft$
that do not belong to any alphabet and we set $\Sigma_\bumpers = \Sigma \cup \{\triangleright,\triangleleft\}$.
Given two sets $A$ and $B$, we write $A \partto B$ for the set of partial functions from $A$ to $B$
and $A \partinj B$ for partial injections from $A$ to $B$.
Finally, given a relation $\to$, we write $\to^*$ for its reflexive transitive closure.

\begin{definition}
\label{def:2nfa}
A two-way nondeterministic automaton (2NFA) $\cA$ over the alphabet $\Sigma$ consists of:
\begin{itemize}
\item A finite set $Q$ of \emph{states} equipped with a \emph{direction map} $\pol \colon Q \to \{-1, 1\}$.\\
The pair $(Q,\pol)$ is sometimes called
a \emph{directed set of states} and we write $Q^\rightarrow$ for $\pol^{-1}(1)$ and $Q^\leftarrow$ for $\pol^{-1}(-1)$.
Elements of $Q^\rightarrow$ are called \emph{forward} states and those of $Q^\leftarrow$ \emph{backward} states.
\item An \emph{initial state} $q_0 \in Q^\rightarrow$ and a set\footnote{We could take $F$ to be a singleton or replace $q_0$ by a
    set without changing any result of the paper.} of \emph{final states} $F \subseteq Q$.
\item A family of \emph{transition relations} $\delta\colon\Sigma_{\bumpers} \to \powerset(Q\times Q)$.
\end{itemize}
A configuration of the automaton is a triple $(u,q,v) \in \Sigma_\bumpers^* \times Q \times \Sigma_\bumpers^*$.
The immediate successor relation $\to_\delta$ is defined as follows
\[
\begin{array}{lcl !\qquad l}
(u,q,av) &\to_\delta& (ua,r,v) &\text{whenever $(q,r) \in \delta(a)$, $q \in Q^\rightarrow$ and $r \in Q^\rightarrow$} \\
(u,q,av) &\to_\delta& (u,r,av) &\text{whenever $(q,r) \in \delta(a)$, $q \in Q^\rightarrow$ and $r \in Q^\leftarrow$} \\
(ua,q,v) &\to_\delta& (u,r,av) &\text{whenever $(q,r) \in \delta(a)$, $q \in Q^\leftarrow$ and $r \in Q^\leftarrow$} \\
(ua,q,v) &\to_\delta& (ua,r,v) &\text{whenever $(q,r) \in \delta(a)$, $q \in Q^\leftarrow$ and $r \in Q^\rightarrow$} \\
\end{array}
\]
A word $w \in \Sigma^*$ is said to be accepted by $\cA$ if we have
$(\varepsilon,q_0,\triangleright w \triangleleft) \to_\delta^* (\triangleright w \triangleleft,q_f,\varepsilon)$ for some $q_f \in F$. That is, a run of the 2NFA starts in the initial state on the left of the input string, and to accept a word is to reach a final state after overrunning the right-hand side
marker.
\end{definition}

We consider only \emph{deterministic} machines in this paper, i.e.\ we require that the
transition relations are actually partial functions. In fact, we shall study
a further restriction that is \emph{reversibility}.
At the formal level this
amounts to requiring that the inverse of the transition functions
also be partial functions.

\begin{definition}
\label{def:rev}
An automaton $(Q,\pol,q_0,F,\delta)$ over $\Sigma$ is called
\emph{deterministic} (resp.\ \emph{reversible}) if $\delta$ maps each letter $a\in\sigma$
to a \emph{partial function} (resp. \emph{injection}).
\end{definition}

The languages recognized by 2DFA are exactly the regular languages -- that is, \emph{2DFA are equivalent to one-way automata}~\citep[Theorem~15]{RabinScott}. Nowadays, the best-known construction for translating 2DFA to one-way DFA is due to~\citet{shepherdson59}; it is the one that
we shall work with\footnote{Technically speaking, we use two-sided
  behaviours as in~\citep{Birget} so as to discuss monoids while Shepherdson only
  considered one-sided behaviours; the combinatorics remain similar.}.

\begin{example}
\label{ex:2dfaplanar}
We define here the two-way automaton whose behaviour on a sample input is depicted in \Cref{ex:intro}.
The input alphabet is $\{a,b,c\}$, and the set of states is $Q^\rightarrow = \{0^\rightarrow,3^\rightarrow\}$,
$Q^\leftarrow = \{1^\leftarrow,2^\leftarrow\}$ and $F = \{3^\rightarrow\}$.
The transition function is specified the following input/output table:
\[
\begin{array}{c|ccccc}
 & a & b & c & \triangleright & \triangleleft \\
\hline
0^\rightarrow &
  0^\rightarrow & 0^\rightarrow & 1^\leftarrow & 0^\rightarrow & \\
1^\leftarrow &
  2^\leftarrow & 2^\leftarrow & 2^\leftarrow & & \\
2^\leftarrow &
   & 3^\rightarrow  \\
3^\rightarrow &
  3^\rightarrow & 3^\rightarrow & 3^\rightarrow & 3^\rightarrow & 3^\rightarrow \\
\end{array}
\]
It recognizes the language $\{a,b\}^*b\{a,b\}c\{a,b,c\}^*$ by
first looking for the first occurrence of $c$, then doubling
back to check a $b$ occurs two positions earlier. This automaton:
\begin{itemize}
  \item is deterministic, as seen from the fact that each cell of the table contains at most one output;
  \item is not reversible, because the transition specified for $b$ has two different states sent to $3^\rightarrow$.
\end{itemize}
\end{example}

We now formalize the central restriction on 2DFA at play in this paper, which
is planarity. As per \Cref{fig:planarity}, it means that we want to enforce that each
transition profile, as drawn in our example pictures, does not contain any sort
of crossings. \Cref{ex:2dfaplanar} is a (paradigmatic) example of a planar 2DFA.

\begin{definition}
\label{def:2dfa-transitionprofile}
Let $(Q,\pol)$ be a directed set of states.
For any binary relation $f \subseteq Q \times Q$, we define 
the \emph{transition profile} $\transgraph(\pol,f)$ as
the directed graph with vertices $Q \times \{-1,1\}$ and edges
\[ \{(q,-\pol(q)) \to (r,\pol(r)) \mid (q,r) \in f \} \]
\end{definition}

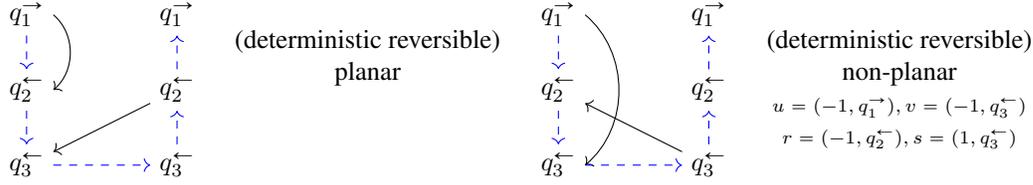
\begin{figure}
  \begin{tabular}{cccc}
      \multirow{3}*{\begin{tikzpicture}
      \node (q11) at (0,3) {$q_1^\rightarrow$};
      \node (q21) at (0,2) {$q_2^\leftarrow$};
      \node (q31) at (0,1) {$q_3^\leftarrow$};
      \node (q12) at (2,3) {$q_1^\rightarrow$};
      \node (q22) at (2,2) {$q_2^\leftarrow$};
      \node (q32) at (2,1) {$q_3^\leftarrow$};
      \draw[->] (q11.east) to [bend left=50] (q21.east);
      \draw[->] (q22) -- (q31);
      {
        \draw[->, dashed, color=blue] (q11) -- (q21);
        \draw[->, dashed, color=blue] (q21) -- (q31);
        \draw[->, dashed, color=blue] (q31) -- (q32);
        \draw[->, dashed, color=blue] (q32) -- (q22);
        \draw[->, dashed, color=blue] (q22) -- (q12);
      }
      \end{tikzpicture}}
      &
      & 
      \multirow{3}*{\begin{tikzpicture}
      \node (q11) at (0,3) {$q_1^\rightarrow$};
      \node (q21) at (0,2) {$q_2^\leftarrow$};
      \node (q31) at (0,1) {$q_3^\leftarrow$};
      \node (q12) at (2,3) {$q_1^\rightarrow$};
      \node (q22) at (2,2) {$q_2^\leftarrow$};
      \node (q32) at (2,1) {$q_3^\leftarrow$};
      \draw[->] (q11.east) to [bend left=50] (q31.east);
      \draw[->] (q32) -- (q21);
      {
        \draw[->, dashed, color=blue] (q11) -- (q21);
        \draw[->, dashed, color=blue] (q21) -- (q31);
        \draw[->, dashed, color=blue] (q31) -- (q32);
        \draw[->, dashed, color=blue] (q32) -- (q22);
        \draw[->, dashed, color=blue] (q22) -- (q12);
      }
      \end{tikzpicture}}
      &\\ &
      (deterministic reversible) &&
      (deterministic reversible) 
      \\
      &
      planar 
      && non-planar\\
      &&&{\scriptsize $u = (-1,q_1^\rightarrow), v = (-1,q_3^\leftarrow)$}\\
      &&&{\scriptsize $r = (-1,q_2^\leftarrow), s = (1,q_3^\leftarrow)$}\\
      \\  \end{tabular}
\caption{Planar and non-planar behaviours for
$Q^\rightarrow = \{q_1^\rightarrow\}$ and $Q^\leftarrow = \{q_2^\leftarrow,q_3^\leftarrow\}$.}
\label{fig:planarity}
\end{figure}

\begin{definition}
\label{def:planarity}
Let $Q$ be a finite set. For every total order $<$ over $Q$,
extend $<$ to a total order over $Q \times \{-1,1\}$
by setting $(q,i) < (r,j)$ if and only if either
\[
\text{$i = -1$ and $j = 1$}
\qquad \text{or} \qquad
\text{$q < r$ and $i = j = 1$}
\qquad \text{or} \qquad
\text{$q > r$ and $i = j = -1$}
\]
Call a transition $f \in \powerset(Q^2)$ \emph{planar} with
respect to the order $<$ and polarity function $\pol$ if
there is no ordered sequence of vertices $u < r < v < s$ such that either $u \to v$ or $v \to u$ is  in $\transgraph(\pol,f)$ and either $r \to s$ or $s \to r$ is in $\transgraph(\pol,f)$.

A 2NFA $(Q,\pol,q_0,F,\delta)$ over alphabet $\Sigma$ is called planar when there is a total order $<$ over $Q$
such that, for every input letter $a \in \Sigma$, $\delta(a)$ is planar with respect to $<$ and $\pol$.
\end{definition}

\begin{remark}\label{rem:combinatorial-map}
  This definitions says that we must choose a total order on states which will constrain how all transitions will be drawn in the plane. This is stronger than merely requiring each transition profile to individually be a planar graph. In fact, it corresponds to the planarity of a \emph{combinatorial map} \citep[see][Chapter~4]{MoharThomassen} whose underlying graph is obtained from the transition profile by forgetting the edge directions.
\end{remark}

Our results will primarily concern two-way \emph{transducers}, a notion that generalizes two-way automata
by realizing binary relations between $\Sigma^*$ and $\Gamma^*$ instead of languages, i.e. sets of words.
Since we are going to restrict our attention to deterministic machines the relations
will always be partial functions $\Sigma^* \partto \Gamma^*$.

\begin{definition}
\label{def:2nft}
A two-way nondeterministic transducer (2NFT) $\cT$ with input alphabet $\Sigma$ and output alphabet
$\Gamma$ consists of a tuple $(Q, \pol, q_0, F, \delta)$ with $(Q,\pol)$ a directed set of states,
$q_0 \in Q^\rightarrow$, $F \subseteq Q$ and, most importantly, with $\delta\colon \Sigma \to \powerset(Q \times \Gamma^* \times Q)$.

The notion of configuration is defined in the same way as \Cref{def:2nfa}, except that now the immediate successor relation $\xrightarrow{w}_\delta$
is tagged by a word $w \in \Gamma^*$. It is defined in the expected way, analogous to the case of 2NFA.
We then say that $(u,v)$ is in the relation realized by $\cT$ if we have a sequence of configurations
$(\vareps,q_0,\triangleright u \triangleleft) \xrightarrow{v_1} \dots \xrightarrow{v_n} (\triangleright u \triangleleft,q_f,\vareps)$ such that $q_f \in F$ and $v = v_1 \ldots v_n$.

One can consider the
2NFA $\lfloor \cT \rfloor$ over $\Sigma$ obtained by forgetting the output, i.e. $(Q, \pol, q_0, F, \lfloor \delta \rfloor)$
where $\lfloor \delta \rfloor(a) = \{ (q,r) \mid \exists w. ~ (q,w,r) \in \delta\}$
and check that the language recognized by $\lfloor \cT \rfloor$
is exactly the domain of the relation defined by $\cT$.
We can use the map $\cT \mapsto \lfloor \cT \rfloor$ to lift all notions relevant to automata to transducers:
we say that $\cT$ is deterministic/reversible/planar when $\lfloor \cT \rfloor$ is.
\end{definition}

A 2DFT $\cT$ with input alphabet $\Sigma^*$ and output alphabet $\Gamma^*$
defines a partial function $\Sigma^* \partto \Gamma^*$. Let us write
$\cT\colon \Sigma^* \partto \Gamma^*$ and conflate $\cT$ with the
partial function it computes in notations when convenient.

\section{Planar 2DFTs can only compute first-order transductions}
\label{sec:aperiodic}

To show that planar 2DFTs can only compute
first-order transductions, we use a characterization of the latter due to \citet{cartondartoisaperiodic}. To recall it, we need to start with \emph{aperiodic monoids}, the algebraic counterpart of star-free languages (as mentioned in the introduction).
\begin{definition}
\label{def:aperiodicity}
A finite monoid $M$ is aperiodic if there is $n \in \naturalN$
such that for every $x \in M$, $x^{n+1} = x^n$.
\end{definition}

To make full use of this algebraic condition, we associate finite monoids to our machines. The idea is to look at the
transition profiles generated by the transition relations $\delta$ and
their composition.

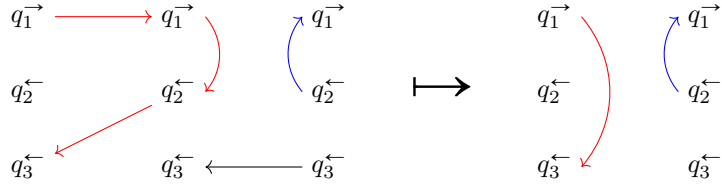
\begin{figure}[h]
  \begin{center}
    \begin{tikzpicture}
      \node (q13) at (4,3) {$q_1^\rightarrow$};
      \node (q23) at (4,2) {$q_2^\leftarrow$};
      \node (q33) at (4,1) {$q_3^\leftarrow$};

      \node (q14) at (6,3) {$q_1^\rightarrow$};
      \node (q24) at (6,2) {$q_2^\leftarrow$};
      \node (q34) at (6,1) {$q_3^\leftarrow$};
      \draw[->, color=red] (q13) -- (q14);
      \draw[->, color=red] (q24) -- (q33);

      \node (q15) at (8,3) {$q_1^\rightarrow$};
      \node (q25) at (8,2) {$q_2^\leftarrow$};
      \node (q35) at (8,1) {$q_3^\leftarrow$};
      \draw[->, color=red] (q14.east) to [bend left=40] (q24.east);
      \draw[->] (q35) -- (q34);
      \draw[->, color=blue] (q25.west) to [bend left=40] (q15.west);

      \node at (9.5,2) {\Huge$\mapsto$};

      \node (q11) at (11,3) {$q_1^\rightarrow$};
      \node (q21) at (11,2) {$q_2^\leftarrow$};
      \node (q31) at (11,1) {$q_3^\leftarrow$};
      \node (q12) at (13,3) {$q_1^\rightarrow$};
      \node (q22) at (13,2) {$q_2^\leftarrow$};
      \node (q32) at (13,1) {$q_3^\leftarrow$};
      \draw[->, color=red] (q11.east) to [bend left=40] (q31.east);
      \draw[->, color=blue] (q22.west) to [bend left=40] (q12.west);
    \end{tikzpicture}
  \end{center}
  \caption{Compostition of two transitions.}
\label{fig:comptrans}
\end{figure}
\begin{definition}
\label{def:behavmonoid}
Consider a directed set of states $(Q,\pol)$.
Given two transitions $f, g \in \powerset(Q^2)$,
consider the digraph $\compgraph(\rho,f,g)$ by
taking the disjoint union of $\transgraph(\rho,f)$
and $\transgraph(\rho,g)$
and identifying $Q \times \{1\}$ in the former with $Q \times \{-1\}$ in the latter.
(Formally, $\compgraph(\rho,f,g)$ is defined as having
vertices $Q \times \{-1,0,1\}$ and the edge $(u,i) \to (v,j)$
if and only if either $(u,2i+1)\to(v,2j+1))$ is
an edge of $\transgraph(\rho,f)$ or $(u,2i-1)\to(v,2j-1))$
is in $\transgraph(\rho,g)$.)

The \emph{composition} $f * g$ is then defined as the
transition relating $q$ to $r$ if and only if there is a
path from $(q,-\rho(q))$ to $(r,\rho(r))$ in the digraph $\compgraph(\rho,f,g)$.
This way, $\transgraph(\rho,\,f*g)$ contains an edge $u\to v$ if and only if $\compgraph(\rho,f,g)$ contains a path from $u$ to $v$ (see \Cref{fig:comptrans} for an illustration).
Composition can be shown to be associative, with the identity relation as unit element \citep[see e.g.][]{Birget}: this yields a finite monoid
$\behavmon(Q,\pol)$ with carrier $\powerset(Q^2)$.

Given a 2NFA $\cA = (Q, \pol, q_0, F, \delta)$ over alphabet $\Sigma$,
define $\behavmon(\cA)$
as the least submonoid of $\behavmon(Q,\pol)$ containing $\{\delta(a) \mid a \in \Sigma\}$.
For 2NFTs, set $\behavmon(\cT) = \behavmon(\lfloor\cT\rfloor)$.
\end{definition}

When $\cT$ is a 2DFT, \cite[Def.~4]{cartondartoisaperiodic} define its transition monoid $\behavmon(\cT)$
in the same way\footnote{The presentation is different, but
the result is easily seen to be isomorphic.}, so we may use their characterization as our reference definition\footnote{The main result of \citep{cartondartoisaperiodic} is that this definition is equivalent to another one based on first-order logic, hence the name of the function class. The papers cited at the very end of the introduction provide several alternative characterizations.} of first-order transductions.

\begin{definition}
  A first-order transduction is a string-to-string function computed by a 2DFT $\cT$ whose monoid of behaviours $\behavmon(\cT)$ is aperiodic.
\end{definition}

Next, we want to show that the monoid of behaviours of a planar 2DFT
is necessarily aperiodic. As submonoids of aperiodic monoids are also
aperiodic, it suffices to show that the following class of monoids defined from
directed set of states are aperiodic.

\begin{definition}
\label{def:tlmon}
Given a directed set of states $(Q,\pol)$ and a total order $<$ on $Q$, call $\tlmon(Q,\pol,<)$ the subset of $\behavmon(Q,\pol)$
containing only transitions that are both deterministic (i.e.\ partial functions) and planar with respect to $\pol$ and $<$.
\end{definition}

\begin{theorem}
\label{thm:tlmonisaperiodic}
 $\tlmon(Q,\pol,<)$ is an aperiodic submonoid of $\behavmon(Q,\pol)$.
\end{theorem}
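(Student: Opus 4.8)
The plan is to prove the two assertions in turn: that $\tlmon(Q,\pol,<)$ is a submonoid of $\behavmon(Q,\pol)$, and that it is aperiodic. For the submonoid part, the identity relation is a partial function, and its profile $\transgraph(\pol,\mathrm{id})$ consists of nested ``straight-through'' strands joining the two copies of each $q$, which are non-crossing for the order on $Q\times\{-1,1\}$; hence $\mathrm{id}\in\tlmon$. It then remains to show closure under $*$. Determinism is preserved because the two-way composite of two deterministic behaviours is again deterministic (following a maximal path in $\compgraph(\pol,f,g)$ from a source vertex, the out-degree-one condition of the partial functions $f$ and $g$ leaves at most one endpoint; this is the standard Birget/Shepherdson composition). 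Planarity is preserved by a topological argument: I would draw $\transgraph(\pol,f)$ and $\transgraph(\pol,g)$ without crossings in two adjacent boxes respecting $<$, glue them along the identified copy of $Q$ to obtain a crossing-free drawing of $\compgraph(\pol,f,g)$, and observe that each edge of $f*g$ is realized by a path of this drawing. A Jordan-curve argument then shows that two chords of $f*g$ can be contracted without crossing: if they crossed, their underlying paths would already have to cross inside $\compgraph(\pol,f,g)$. I would phrase this cleanly using the combinatorial-map formulation of \Cref{rem:combinatorial-map}.

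For aperiodicity I would invoke the standard fact that a finite monoid is aperiodic if and only if it is group-free, i.e.\ every maximal subgroup $H_e$ (the $\mathcal{H}$-class of an idempotent $e$) is trivial. So let $e\in\tlmon$ be idempotent and $g\in H_e$, with two-sided inverse $\bar g\in H_e$ satisfying $g*\bar g=\bar g*g=e$; the goal is $g=e$. The key is a \emph{planarity rigidity} lemma: once the set of boundary endpoints that $g$ connects ``across'' (its through-lines) on the left and on the right are fixed, there is a \emph{unique} non-crossing way to join them, because a non-crossing perfect matching between two linearly ordered boundaries admits no inversions. Being in the $\mathcal H$-class of $e$ should pin down, via the $\mathcal R$- and $\mathcal L$-orders, both through-endpoint sets as well as the cup/cap pattern (the left--left and right--right connections) to be exactly those of $e$. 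Since $g$ and $e$ then share the same endpoint data and are both planar, rigidity forces the same diagram, so $g=e$. Hence every $H_e$ is trivial, $\tlmon$ is group-free, and therefore aperiodic; as a subset closed under $*$ it is a submonoid, completing the statement.

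The main obstacle I anticipate is the bookkeeping behind this rigidity argument in the two-way, partial setting. One must verify carefully that $\mathcal{H}$-equivalence genuinely fixes the entire combinatorial connection pattern of $g$ up to its non-crossing realization (of which there is none left), and that the polarity-twisted circular order on $Q\times\{-1,1\}$, together with the fact that deterministic planar transitions need not be injective, still lets invertibility inside $H_e$ restrict $g$ to a \emph{planar bijection} on a linearly ordered endpoint set, where ``non-crossing'' really does mean ``no inversions.'' Making precise how cups, caps, and through-lines behave under $*$ in $\compgraph(\pol,f,g)$ --- so as to control $\mathcal R$, $\mathcal L$ and hence $\mathcal H$ --- is the delicate step; by contrast, the planarity-preservation of the first paragraph, while topological, is essentially routine once the gluing picture is fixed.
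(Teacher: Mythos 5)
Your first half (the unit is planar deterministic; composition preserves both) is essentially the paper's own argument, including the gluing of the two planar drawings along the shared copy of $Q$. One point to make explicit: determinism is what makes your Jordan-curve step go through. Two paths of $\compgraph(\pol,f,g)$ realizing distinct edges of $f*g$ need not cross transversally in order for the contracted chords to cross --- they can merely share vertices --- and it is the out-degree-$\leq 1$ condition that forces two paths meeting at a vertex to coincide from that vertex onwards, hence to have the same endpoint. The paper's counterexample of two planar nondeterministic transitions whose composite is non-planar shows this is not a pedantic point, so your closure argument should invoke determinism here and not only for the determinism of the composite.

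The aperiodicity half is where the genuine gap lies. Your route via group-freeness / $\GrH$-triviality is viable in principle --- it is essentially the alternative proof sketched in the paper's remark after the theorem --- but the step you defer with \enquote{should pin down} is the entire content. Moreover, the \emph{planarity rigidity} lemma as you state it (a non-crossing perfect matching between two ordered boundaries is determined by its sets of endpoints) concerns injective pairings, whereas elements of $\tlmon(Q,\pol,<)$ are deterministic but generally non-injective: already on $Q^\rightarrow=\{1<2<3\}$ with no backward states, the monotone partial maps sending $1,2\mapsto 1$, $3\mapsto 2$ and sending $1\mapsto 1$, $2,3\mapsto 2$ have identical endpoint sets on both sides yet are distinct planar transitions, so endpoint data alone does not rigidify the diagram. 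One must additionally track the fibres (which vertices get identified across the diagram), which is exactly what the invariants $f_r$ and $f_l$ in the paper's remark record, and proving that $\GrR$- and $\GrL$-equivalence preserve precisely this data is the nontrivial bookkeeping you have not done. The paper's main proof sidesteps all of this with a more elementary decomposition: writing $f=f_\dashv * f_\leftrightarrow * f_\vdash$, where $f_\dashv$ and $f_\vdash$ are the two \enquote{turning} parts padded with partial identities, one gets $f^{n+1}=f_\dashv * g^{n} * f_\leftrightarrow * f_\vdash$ for $g=f_\leftrightarrow * f_\vdash * f_\dashv$; since $g\subseteq Q^\rightarrow\times Q^\rightarrow\cup Q^\leftarrow\times Q^\leftarrow$, it splits into two monotone partial functions on finite chains that do not interact under composition, and aperiodicity reduces to the classical fact that the monoid of partial monotone maps on a finite linear order is aperiodic. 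I would recommend either completing your Green-relations argument with the fibre invariants or switching to this decomposition.
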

\begin{proof}
  First, we show that it is a submonoid. We leave the reader to mechanically check that the unit element is deterministic planar. Concerning closure under composition, it is topologically intuitive:\footnote{Our argument here is somewhat informal. It could be made rigorous using the Heffter--Edmonds--Ringel principle \citep[cf.][Theorem~3.2.4]{MoharThomassen} relating combinatorial maps and actual topology. In principle, there should probably also exist a purely combinatorial proof from \Cref{def:planarity}, but it would likely be cumbersome.} for $f,g\in\tlmon(Q,\pol,<)$, the combinatorial map (cf.~\Cref{rem:combinatorial-map}) $\compgraph(\rho,f,g)$ is planar because it is obtained by gluing two planar maps along a common boundary, identifying vertices; by determinism, it consists of a disjoint union of rooted trees that can be drawn in the plane in a pairwise non-crossing fashion, and this drawing is homotopy equivalent to $\transgraph(\rho,\,f*g)$ (fixing the vertices $Q\times\{-1,1\}$), by contracting the trees to stars. Note that determinism is necessary here: \Cref{fig:nondetcex} shows that planar non-deterministic behaviours do not compose.
  \begin{figure}[h]
    \begin{center}
      \begin{tikzpicture}
        \node (q13) at (4,3) {$q_1^\rightarrow$};
        \node (q23) at (4,2) {$q_2^\rightarrow$};
          
        \node (q14) at (6,3) {$q_1^\rightarrow$};
        \node (q24) at (6,2) {$q_2^\rightarrow$};
        \draw[->, color=blue] (q13) -- (q14);
        \draw[->, color=blue] (q23) -- (q14);
  
        \node (q15) at (8,3) {$q_1^\rightarrow$};
        \node (q25) at (8,2) {$q_2^\rightarrow$};
        \draw[->, color=blue] (q14) -- (q15);
        \draw[->, color=blue] (q14) -- (q25);
  
        \node at (9.5,2.5) {\Huge$\mapsto$};
  
        \node (q11) at (11,3) {$q_1^\rightarrow$};
        \node (q21) at (11,2) {$q_2^\leftarrow$};
        \node (q12) at (13,3) {$q_1^\rightarrow$};
        \node (q22) at (13,2) {$q_2^\leftarrow$};
        \draw[->, color=blue] (q11) -- (q12);
        \draw[->, color=blue] (q11) -- (q22);
        \draw[->, color=blue] (q21) -- (q12);
        \draw[->, color=blue] (q21) -- (q22);
      \end{tikzpicture}
    \end{center}
    \caption{A non-planar composition of two planar non-deterministic behaviours.}
  \label{fig:nondetcex}
  \end{figure}
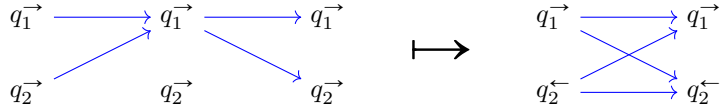
  
  Let us now show aperiodicity. Let $f \in \tlmon(Q,\pol,<)$ and let us partition $f$ into
  \[ f_\righttoleftarrow = f \cap (Q^\rightarrow \times Q^\leftarrow) \qquad f_\lefttorightarrow = f \cap (Q^\leftarrow \times Q^\rightarrow) \qquad f_\leftrightarrow = f \setminus (f_\righttoleftarrow \cup f_\lefttorightarrow)\]
  We shall consider variants of $f_\righttoleftarrow$ and $f_\lefttorightarrow$ where \enquote{partial identities} have been added:
  \[
    f_\dashv = f_\righttoleftarrow \cup \{ (q,q) \in Q^2 \mid \not\exists r : (q,r) \in f_\righttoleftarrow \}\qquad
    f_\vdash = f_\lefttorightarrow \cup \{ (q,q) \in Q^2 \mid \not\exists r : (q,r) \in f_\lefttorightarrow \}
  \]
  The point is that we have the following identity (which, for now, does not depend on planarity):
  \[ f = f_\dashv * f_\leftrightarrow * f_\vdash \quad\text{hence}\quad \forall n\in\naturalN,\; f^{n+1} = f_\dashv * (f_\leftrightarrow * f_\vdash * f_\dashv)^n * f_\leftrightarrow * f_\vdash \]
  Let $g = f_\leftrightarrow * f_\vdash * f_\dashv$. The above identity shows that if the sequence $(g^n)_{n\geq1}$ is eventually constant, then so is $(f^n)_{n\geq1}$ -- and the latter is what we want, according to the definition of aperiodicity.

  Observe that $g \subseteq Q^\rightarrow \times Q^\rightarrow \cup Q^\leftarrow \times Q^\leftarrow$. Thus, $g$ is the union of a partial function $g_\rightarrow\colon Q^\rightarrow \partto Q^\rightarrow$ with the transpose of a partial function $g_\leftarrow\colon Q^\leftarrow \partto Q^\leftarrow$, and these two components do not interact in a composition. Since $g$ is planar (as a product of planar transitions), these partial functions are both monotone. So, to conclude, it suffices to invoke the well-known fact that the monoid of partial monotone functions on a finite total order is aperiodic \citep[see e.g.][Lemma~3.2]{aperiodic}.
\end{proof}

\begin{remark}
  An alternative proof of aperiodicity can also be obtained by characterizing some of
\emph{Green's relations} over $\tlmon(Q, \pol, <)$, which we sketch now.
The $\GrR$-class of any element of $f \in \tlmon(Q, \pol, <)$ (which corresponds
to the left ideal generated by $f$) is fully determined
by $f_\righttoleftarrow$ and the set $f_r \in \powerset(\powerset(Q))$ defined as
\[ 
  f_r \quad = \quad \{f_\leftrightarrow(\{q\}) \mid q \in Q^\leftarrow\} \cup \{f_\leftrightarrow^{-1}(\{q\}) \mid q \in Q^\rightarrow\}
\]
The intuition is that $f_\righttoleftarrow$ and $f_r$ describes the left part
of a transition diagram, plus information on which vertices might end up being
identified by $f$ on the other side of the diagram. This is the only data
that can be preserved by an invertible multiplication on the right, and
conversely any element is reachable by multipying by a suitable planar partial
bijection.
  Similarly, the $\GrL$-class (right ideal generated by $f$) is determined by $f_\lefttorightarrow$ and a set
$f_l$ defined in a dual manner.
Then it can be observed that the resulting joint map
$f \mapsto (f_\lefttorightarrow, f_l, f_\righttoleftarrow, f_r)$ is injective
because of planarity. So in particular, every class $\GrH$-class (an intersection
of an $\GrR$ and a $\GrL$-class) is a singleton, which is equivalent to the monoid
being aperiodic.

Note that for a monoid of reversible transitions,
the problem can reduce to the aperiodicity of the Jones monoids.
In that setting the Green relations can be characterized using similar (although
slightly simpler) maps as shown in~\citet[Theorem 3.5, (iii-v)]{GreenRelKauffman}.
\end{remark}

In any case, we can deduce the first half of the main theorem.
\begin{corollary}
\label{cor:main-planarisfo}
Every planar 2DFT computes a first-order transduction.
\end{corollary}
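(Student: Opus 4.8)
The plan is to observe that \Cref{cor:main-planarisfo} follows almost immediately from \Cref{thm:tlmonisaperiodic} together with the reference definition of first-order transduction. First I would unfold the definitions. Given a planar 2DFT $\cT = (Q,\pol,q_0,F,\delta)$, planarity means by convention that $\lfloor\cT\rfloor$ is planar, i.e.\ there is a total order $<$ on $Q$ such that for each letter $a\in\Sigma$, the output-forgetting transition $\lfloor\delta\rfloor(a)\in\powerset(Q^2)$ is planar with respect to $<$ and $\pol$. Likewise, determinism of $\cT$ means that each $\lfloor\delta\rfloor(a)$ is a partial function. Hence every generator $\lfloor\delta\rfloor(a)$ is simultaneously deterministic and planar, so by \Cref{def:tlmon} it belongs to the submonoid $\tlmon(Q,\pol,<)$ of $\behavmon(Q,\pol)$.

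Next I would exploit the minimality built into the behaviour monoid. Since $\behavmon(\cT)=\behavmon(\lfloor\cT\rfloor)$ is by definition the least submonoid of $\behavmon(Q,\pol)$ containing all the generators $\lfloor\delta\rfloor(a)$, and $\tlmon(Q,\pol,<)$ is itself a submonoid containing all of them, we obtain the inclusion $\behavmon(\cT)\subseteq\tlmon(Q,\pol,<)$. By \Cref{thm:tlmonisaperiodic} the right-hand side is aperiodic, and aperiodicity passes to submonoids: an exponent $n$ witnessing $x^{n+1}=x^n$ for every element of the larger monoid witnesses the same identity for every element of any subset. Therefore $\behavmon(\cT)$ is aperiodic.

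Finally I would invoke the characterization of \citet{cartondartoisaperiodic} adopted in this section as our working definition: a 2DFT whose monoid of behaviours is aperiodic computes a first-order transduction. Applying this to $\cT$, whose behaviour monoid we have just shown to be aperiodic, yields exactly the claim.

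There is no genuine obstacle here, since all the substance has been front-loaded into \Cref{thm:tlmonisaperiodic} and the corollary is a matter of matching definitions. The one point deserving a moment's care is that the generators of $\behavmon(\cT)$ are the output-forgetting relations $\lfloor\delta\rfloor(a)$ rather than the transition functions of $\cT$ themselves, and that these inherit both determinism and planarity from $\cT$ through the map $\cT\mapsto\lfloor\cT\rfloor$; once this is noted, the inclusion into $\tlmon(Q,\pol,<)$, and hence aperiodicity, are automatic.
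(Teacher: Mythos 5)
Your proposal is correct and follows exactly the paper's own argument: the generators $\lfloor\delta\rfloor(a)$ are deterministic and planar, hence $\behavmon(\cT)\subseteq\tlmon(Q,\pol,<)$, and aperiodicity follows from \Cref{thm:tlmonisaperiodic} together with closure of aperiodicity under submonoids. The paper states this more tersely, but the content is identical.
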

\begin{proof}
  This reduces to \Cref{thm:tlmonisaperiodic}, since
the behaviour monoid $\behavmon(\cT)$ of a planar transducer $\cT$ with
directed set of states $(Q, \pol)$ ordered by $<$ will always be a submonoid of $\tlmon(Q, \pol, <)$.
\end{proof}

\section{Planar 2RFTs compute all first-order transductions}
\label{sec:krohnrhodes}

The objective of this section is to establish the other half of~\Cref{thm:main}.

\begin{theorem}
  Any first-order transduction can be computed by a planar 2RFT.
\end{theorem}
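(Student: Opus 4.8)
The plan is to realise the class of functions computed by planar 2RFTs as a composition-closed class that contains enough simple functions to generate all first-order transductions. Concretely, I would establish three facts: (i) the functions realised by planar 2RFTs are closed under composition; (ii) this class contains a small library of basic functions, at the very least the reverse function $\ttreverse$ and the letter-to-letter maps induced by an arbitrary substitution $\Sigma \to \Gamma^*$; and (iii) every first-order transduction factors as a finite composition of such basic functions. Combining (i)--(iii) immediately yields the theorem.

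For (i), I would not reprove the composition construction but reuse the one of \citet{ReversibleTransducers} for 2RFTs, whose output is again a 2RFT. The additional work is to equip the state space of the composite machine with a total order witnessing planarity. The composite states are built from the crossing-sequence data of the two factors, so the natural candidate order is a lexicographic combination of the two component orders; I would then check that, under this order, every transition profile of the composite decomposes into the profiles of the factors glued along a common boundary, so that planarity of the factors --- a property of combinatorial maps stable under such gluing, exactly as already exploited in the proof of \Cref{thm:tlmonisaperiodic} --- transfers to the composite. This is the step I expect to be the main obstacle: reversibility is what guarantees that the glued diagram is a disjoint union of paths rather than a branching graph, which is precisely what keeps the combinatorial map planar, but pinning down the order on composite states and verifying the absence of crossings for every letter is where the bookkeeping is delicate.

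For (ii), the reverse function is computed by a planar 2RFT that sweeps once to the right-hand marker and then emits letters on the backward sweep, while a letter-to-letter substitution is computed by a one-state transducer, which is trivially reversible and planar; both are planar reversible by inspection. For (iii), I would invoke the decomposition theorem of \citet{KrohnRhodes}: since $f$ is first-order, the 2DFT supplied by~\citep{cartondartoisaperiodic} has an aperiodic behaviour monoid, which divides an iterated wreath product of flip-flop (two-state reset) monoids. I would translate this cascade into a composition of transducers, each layer annotating the input with one additional bit computed by a single flip-flop. The subtle point is that a flip-flop annotation is monotone but \emph{not} reversible as a one-way machine, so I would realise each layer as a planar 2RFT by using two-wayness to recompute the bit on demand --- inserting copies of $\ttreverse$ where needed to exchange a look-behind for a look-ahead --- rather than overwriting state; the strict weakness of one-way planar automata shown by \citet{Higgins95} confirms that this recourse to two-wayness is genuinely unavoidable.

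Putting the pieces together, a first-order transduction is expressed as a composition of reverses, letter-to-letter maps, and flip-flop-annotation layers, each a planar 2RFT, and closure under composition from (i) makes the whole a planar 2RFT. The crux of the argument is thus the planarity-preservation lemma for composition in step (i); once that is in hand, steps (ii) and (iii) amount to assembling and specialising known constructions.
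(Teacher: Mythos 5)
Your steps (i) and (ii) match the paper's architecture: closure under composition is \Cref{lem:revPlanarCompo}, $\ttreverse$ is indeed realized by a small planar 2RFT, and your idea of using two-wayness to make a flip-flop layer reversible is in the spirit of \Cref{lem:flipflop}. The genuine gap is in step (iii): the generating set you propose --- reverses, morphisms, and flip-flop annotation layers --- cannot generate all first-order transductions, because every generator other than $\ttreverse$ is a sequential function. Since sequential functions are rational, rational functions are closed under composition, and $\ttreverse\circ g\circ\ttreverse$ is rational whenever $g$ is, any composition of sequential functions and reverses is either a rational function or of the form $g\circ\ttreverse$ with $g$ rational. This excludes, for instance, the first-order transduction $w_1\#\cdots\#w_k\mapsto\ttreverse(w_1)\#\cdots\#\ttreverse(w_k)$ (\enquote{map-reverse}), which is neither rational nor the reverse of a rational function. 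Annotating the input with the bits of a Krohn--Rhodes cascade for the behaviour monoid of the 2DFT supplied by \citet{cartondartoisaperiodic} is only a sequential preprocessing: it makes the two-way machine's behaviour locally determined, but it does not eliminate the two-way head movement needed to emit the output in the correct order, and you give no mechanism for realizing that last step inside your generating set.

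The paper closes exactly this hole by importing the decomposition $f\circ\ttreverse\circ g\circ\ttreverse\circ h$ from \citet[Lemma~4.8]{ListFunctions} (itself derived from the aperiodic SST characterization of \citealp{FOSST,AperiodicSST}), where $g$ and $h$ are aperiodic sequential --- and hence amenable to the Krohn--Rhodes argument of \Cref{thm:kr} --- but $f$ is a \emph{monotone register transducer}, i.e.\ a copyless streaming string transducer with monotone updates. Simulating these register transducers by planar 2RFTs (\Cref{lem:mrt}), where a query for a register's value triggers a back-and-forth traversal whose planarity is guaranteed precisely by the monotonicity of the updates, is a substantial component of the proof for which your proposal has no counterpart. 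To repair your argument you would need to add such a generator (or an equivalent one, such as a blockwise \enquote{map} combinator) to your library and prove it planar-2RFT-computable.
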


Let us walk through the proof of the theorem, delegating important lemmas to
further subsections.
We use two results from the literature to characterize first-order transductions
as compositions of simpler functions.
The first is \citeauthor{KrohnRhodes}'s \citeyearpar{KrohnRhodes} decomposition theorem,
specialized to the following standard statement found in e.g.\ \cite[Theorem~4.8]{aperiodic}
about \emph{sequential functions} -- these are the functions computed by \emph{one-way} deterministic transducers (i.e.\ 2DFTs with $\rho(Q) = \{1\}$).

\begin{theorem}[Aperiodic Krohn--Rhodes decomposition]
  \label{thm:kr}
  Any \emph{aperiodic sequential function} $f\colon \Sigma^* \to \Gamma^*$ can be
  realized as a composition $f = f_1 \circ \ldots \circ f_n$ (with $f_i :
  \Delta_i^* \to \Delta_{i-1}^*$, $\Delta_0 = \Gamma$ and $\Delta_n =
  \Sigma$) where each function $f_i$ is computed by some aperiodic sequential
  transducer \emph{with 2 states}.
\end{theorem}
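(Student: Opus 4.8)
The plan is to reduce this statement to the purely algebraic form of the aperiodic Krohn--Rhodes theorem and then lift the resulting decomposition from transition monoids up to the level of sequential functions. First I would fix a one-way deterministic transducer $\cT$ computing $f$, with state set $Q$ and transition monoid $\behavmon(\lfloor\cT\rfloor)$; by hypothesis this monoid is aperiodic, so forgetting the output leaves an aperiodic transformation monoid acting on $Q$. The deep external ingredient is the algebraic statement that every finite aperiodic monoid divides an iterated wreath product of copies of the flip-flop monoid $U_2$ -- the three-element monoid of the two-state reset automaton (the identity together with the two constant maps on a two-element set). This is exactly the counter-free case of Krohn--Rhodes: the absence of nontrivial subgroups forces every simple factor to be trivial, leaving only resets. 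I would cite this rather than reprove it, since reconstructing Krohn--Rhodes from scratch is far beyond a sketch.

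Second, I would translate the wreath tower into a cascade of two-state machines by the standard wreath-product (cascade) principle. Each flip-flop factor becomes a two-state transducer that, at every position, reads the current letter together with the states of the factors sitting below it in the cascade, and emits precisely that enriched letter tagged with its own updated state. Concretely, $f_n\colon\Sigma^*=\Delta_n^*\to\Delta_{n-1}^*$ is the length-preserving map annotating each position with the running state of the bottom flip-flop, so $\Delta_{n-1}=\Sigma\times\{0,1\}$, and inductively $f_i$ appends the bit of the $i$-th flip-flop, giving $\Delta_{i-1}=\Delta_i\times\{0,1\}$. Because the states of the lower factors are already recorded in the alphabet $\Delta_i$ that $f_i$ reads, the transition of $f_i$ depends only on its own two states and the current enriched letter; its transition monoid therefore divides $U_2$, so $f_i$ is computed by a two-state aperiodic sequential transducer.

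Third comes the output. After the whole cascade, the alphabet $\Delta_1$ records at each position the input letter together with the full tuple of flip-flop states. Since $\behavmon(\lfloor\cT\rfloor)$ is a quotient of a submonoid of the wreath product, this tuple determines the state of $\cT$ reached at that position; hence the original output of $\cT$ is a letter-to-word function of $\Delta_1$, and the last map $f_1\colon\Delta_1^*\to\Gamma^*$ can be taken to be a morphism, i.e.\ a one-state (a fortiori two-state) aperiodic sequential transducer. Aperiodicity of every factor is then immediate, and the telescoping $f=f_1\circ\cdots\circ f_n$ with the prescribed alphabets falls out by construction.

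The main obstacle I expect is the output bookkeeping rather than the state decomposition: Krohn--Rhodes speaks only about the input/transition side, so the real work is checking that $\cT$'s output can be faithfully deferred to the final morphism without ever needing more than two states at an intermediate level, and in particular that the \emph{terminal} output emitted when $\cT$ overruns the right marker -- which a plain morphism cannot produce on its own -- is absorbed into $f_1$ by a standard end-marker argument. I would also take care that passing from the algebraic \emph{division} relation to an honest \emph{composition} of functions, via the decoding surjection from wreath-product states onto $Q$, does not secretly reintroduce larger state spaces; this is precisely the point at which the two-state bound must be argued rather than assumed.
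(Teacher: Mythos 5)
Your sketch is essentially the standard argument, and the paper itself offers no proof of \Cref{thm:kr} to compare against: it is imported verbatim as a known consequence of Krohn--Rhodes, citing \citep[Theorem~4.8]{aperiodic}. Your route --- algebraic Krohn--Rhodes reducing the aperiodic transition monoid to a wreath tower of flip-flops, realized as a cascade of length-preserving two-state annotating transducers followed by a final letter-to-word morphism --- is exactly how that cited result is obtained, and the two delicate points you flag (producing the terminal output $o(\partial_1^*(w)(q_0))$, which a plain morphism cannot emit without an end-marker trick, and making sure the division-to-cascade translation keeps each level at two states) are indeed the only places where care is needed; both are handled correctly in spirit. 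No gap.
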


We can show that planar 2RFTs compute all aperiodic sequential
functions by establishing that that the functions computed by planar
2RFTs are closed under composition (\Cref{lem:revPlanarCompo}), and by
providing an encoding of aperiodic 2-state transducers (\Cref{lem:flipflop}).

Then we use another result, which is derived from the characterization of
first-order transductions by aperiodic streaming string
transducers~\citep{FOSST,AperiodicSST}.
\begin{lemma}[{rephrasing of~\citep*[Lemma~4.8]{ListFunctions}\footnote{We use
      the theorem numbering from the official published version
      of~\citep{ListFunctions}, which is significantly different from the
      numbering in the arXiv version. Our rephrasing uses the fact that an
      aperiodic \emph{rational} function can be decomposed as $\ttreverse\circ
      g\circ\ttreverse\circ h$ where $g$ and $h$ are aperiodic sequential; this
      decomposition, analogous to a theorem of~\citet{ElgotMezei} on general rational functions, can be proved directly
      starting from the definition of rational function with monoids that is
      used by~\citet{ListFunctions}.}}]
  Every first-order transduction can be decomposed as $f \circ \ttreverse \circ g \circ
  \ttreverse \circ h$ where $f$ is computed by a \emph{monotone register
  transducer} and the functions $g$ and $h$ are aperiodic sequential.
\end{lemma}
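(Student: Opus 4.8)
The plan is to obtain the stated decomposition in two stages, treating the monotone register transducer $f$ as a black box supplied by the literature and concentrating the real work on the rational part. First I would invoke \citet{ListFunctions} in the \emph{original} form of the cited lemma, which factors any first-order transduction $T\colon\Sigma^*\partto\Gamma^*$ as $T = f\circ r$, where $f$ is computed by a monotone register transducer and $r$ is an \emph{aperiodic rational} function. The whole remaining burden is then to establish the promised identity
\[ r \;=\; \ttreverse\circ g\circ\ttreverse\circ h \]
with $g$ and $h$ aperiodic sequential; substituting this into $T = f\circ r$ immediately yields the claim.

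For that identity I would prove an aperiodicity-preserving analogue of the \citet{ElgotMezei} decomposition, working directly with the monoid-based presentation of rational functions used in \citep{ListFunctions}. Concretely, $r$ is recognised by a bimachine whose left and right scans both factor through a finite \emph{aperiodic} monoid $M$ via a morphism $\mu\colon\Sigma^*\to M$. I would take $h$ to be the left-to-right sequential transducer that, while reading the input, tracks the image under $\mu$ of the prefix read so far and emits at each position the current letter annotated with that monoid element. After the first $\ttreverse$, this annotated word is presented in reverse, so the sequential transducer $g$ effectively sweeps the original input from right to left: it tracks the $\mu$-image of the corresponding suffix and, having both the left-context annotation produced by $h$ and the right context it is itself computing, outputs the word prescribed by the bimachine at each position (emitting it mirrored, so that the final $\ttreverse$ restores the correct letter order). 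By construction $\ttreverse\circ g\circ\ttreverse\circ h$ realises exactly the position-by-position output of the bimachine, i.e.\ $r$.

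The crux — and the step I expect to be the main obstacle — is verifying that $g$ and $h$ are genuinely \emph{aperiodic} sequential, not merely sequential. For $h$ this is immediate, since its states are elements of $M$ and its transition monoid embeds into $M$. For $g$ the reading alphabet is the annotated alphabet produced by $h$, and one must check that tracking the suffix $\mu$-image over this richer alphabet still yields a transition monoid that divides the opposite monoid $M^{\mathrm{op}}$; the care needed is that the annotations must not smuggle any new periodic behaviour into the state dynamics of $g$. Since aperiodicity is inherited under division and is preserved when passing from $M$ to $M^{\mathrm{op}}$, once both transition monoids are seen to divide $M$ (respectively $M^{\mathrm{op}}$) their aperiodicity follows, and the reversal map $\ttreverse$, being a fixed length-preserving involution independent of $M$, plays no role in this monoid bookkeeping. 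Composing the resulting decomposition of $r$ with the cited factorisation $T = f\circ r$ then completes the proof.
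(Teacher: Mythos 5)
Your proposal matches the paper's own treatment: the paper likewise cites \citet{ListFunctions} for the factorization through a monotone register transducer followed by an aperiodic rational function, and then (in its footnote) appeals to exactly the aperiodicity-preserving Elgot--Mezei-style decomposition of that rational part via the monoid/bimachine presentation that you spell out. Your version is simply a more explicit write-up of the same argument, including the correct observation that the annotations produced by $h$ do not affect the state dynamics of $g$, so both transition monoids divide the aperiodic monoid $M$ (up to taking the opposite monoid).
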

The $\ttreverse$ function simply refers to a function that reverses its input
and can easily be implemented with a $3$-state reversible planar transducer.
Monotone register transducers are a variation of single-state streaming
stream transducers \citep{SST} with a monotonicity constraint on register assignments.
So we can conclude the proof using an encoding of monotone register transducers
in planar 2RFTs (\Cref{lem:mrt}) and re-using the closure under composition
of planar 2RFTs.

Our remaning task is to prove the aforementioned lemmas. The following subsections will be concerned with
\Cref{lem:revPlanarCompo},~\Cref{lem:flipflop} and~\Cref{lem:mrt} respectively.

\subsection{Closure under composition}
\label{subsec:revPlanarCompo}

The proof of \citep[Theorem~1]{ReversibleTransducers} gives an efficient
construction to compose 2RFTs reminiscent of the
wreath product of semigroups. The basic idea is that one may simply take a
cartesian product of the states, and a transition is composed of:
\begin{itemize}
  \item a first component that simulates the first transducer in the composition,
  \item and a
  second component that simulates the second transducer on the production given by the first
  component, signaling to the simulation of the first transducer to proceed
  with its computation or rewind according to its needs (it is because of
  rewinding that reversibility is required).
\end{itemize}

It turns out that this construction also readily preserves planar behaviours.
We illustrate this in~\Cref{fig:compEx} and thus can show the following.
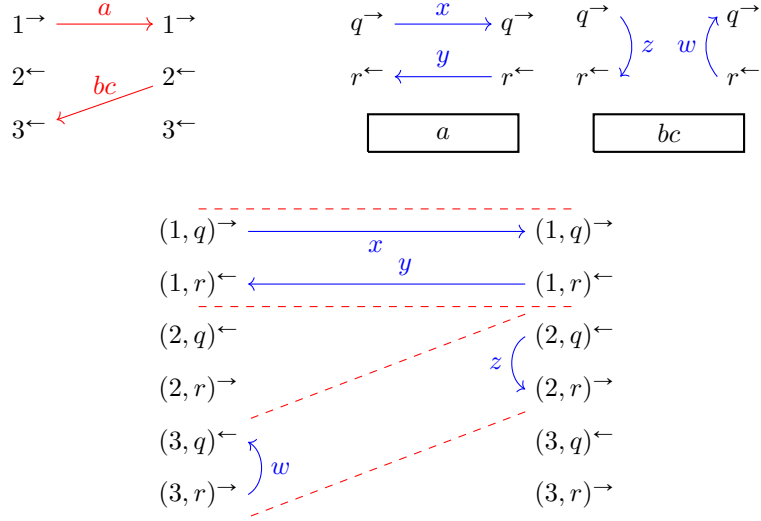
\begin{figure}
  \begin{center}
    \begin{tikzpicture}
      \node (11) at (0,2.7) {$1^\rightarrow$};
      \node (21) at (0,2) {$2^\leftarrow$};
      \node (31) at (0,1.3) {$3^\leftarrow$};
      \node (12) at (2,2.7) {$1^\rightarrow$};
      \node (22) at (2,2) {$2^\leftarrow$};
      \node (32) at (2,1.3) {$3^\leftarrow$};
      \draw[->, color=red] (11) -- node [above] {$a$} (12);
      \draw[->, color=red] (22) -- node [above] {$bc$} (31);

      \node (q1) at (4.5,2.7) {$q^\rightarrow$};
      \node (r1) at (4.5,2) {$r^\leftarrow$};
      \node (q2) at (6.5,2.7) {$q^\rightarrow$};
      \node (r2) at (6.5,2) {$r^\leftarrow$};
      \draw[->, color=blue] (q1) -- node [above] {$x$} (q2);
      \draw[->, color=blue] (r2) -- node [above] {$y$} (r1);

      \node (q1') at (7.5,2.8) {$q^\rightarrow$};
      \node (r1') at (7.5,2) {$r^\leftarrow$};
      \node (q2') at (9.5,2.8) {$q^\rightarrow$};
      \node (r2') at (9.5,2) {$r^\leftarrow$};
      \draw[->, color=blue] (q1'.east) to [bend left = 40] node [right] {$z$} (r1'.east);
      \draw[->, color=blue] (r2'.west) to [bend left = 40] node [left] {$w$} (q2'.west);

      \draw[thick] (4.5,1) -- (4.5,1.5) -- (6.5,1.5) -- (6.5,1) -- (4.5,1);
      \node at (5.5,1.25) {$a$};
      \draw[thick] (7.5,1) -- (7.5,1.5) -- (9.5,1.5) -- (9.5,1) -- (7.5,1);
      \node at (8.5,1.25) {$bc$};
    \end{tikzpicture}

\vspace{2em}
    \begin{tikzpicture}
      \node (1q1) at (0,3.5) {$(1,q)^\rightarrow$};
      \node (1r1) at (0,2.8) {$(1,r)^\leftarrow$};
      \node (2q1) at (0,2.1) {$(2,q)^\leftarrow$};
      \node (2r1) at (0,1.4) {$(2,r)^\rightarrow$};
      \node (3q1) at (0,0.7) {$(3,q)^\leftarrow$};
      \node (3r1) at (0,0) {$(3,r)^\rightarrow$};

      \node (1q2) at (5,3.5) {$(1,q)^\rightarrow$};
      \node (1r2) at (5,2.8) {$(1,r)^\leftarrow$};
      \node (2q2) at (5,2.1) {$(2,q)^\leftarrow$};
      \node (2r2) at (5,1.4) {$(2,r)^\rightarrow$};
      \node (3q2) at (5,0.7) {$(3,q)^\leftarrow$};
      \node (3r2) at (5,0) {$(3,r)^\rightarrow$};

      \draw[dashed, color=red] (1q1.north) -- (1q2.north);
      \draw[dashed, color=red] (1r1.south) -- (1r2.south);
      \draw[->, color=blue] (1q1) -- node [below] {\!\!\!\!\!$x$} (1q2);
      \draw[->, color=blue] (1r2) -- node [above] {\;\;\;\;\;$y$} (1r1);
      
      \draw[dashed, color=red] (2q2.north west) -- (3q1.north east);
      \draw[dashed, color=red] (2r2.south west) -- (3r1.south east);
      \draw[->, color=blue] (2q2.west) to [bend right = 60] node [left] {$z$} (2r2.west);
      \draw[->, color=blue] (3r1.east) to [bend right = 60] node [right] {$w$} (3q1.east);
    \end{tikzpicture}
\end{center}
  \caption{Three planar transitions, one with output in $\{a,b,c\}$, two with inputs
  in $\{a,b,c\}$, and the transition obtained by composing the transducers they
  originate from.}
  \label{fig:compEx}
\end{figure}

\begin{lemma}
  \label{lem:revPlanarCompo}
  Let $\cT_1 : \Sigma^* \partto \Gamma^*$ and $\cT_2 : \Gamma^* \partto \Delta^*$
  be planar 2RFTs. There is a
  planar 2RFT computing the composition of the partial
  functions $\cT_2 \circ \cT_1$.
\end{lemma}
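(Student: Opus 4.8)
The plan is to reuse verbatim the state-product construction of \cite[Theorem~1]{ReversibleTransducers}, which already outputs a \emph{reversible} two-way transducer computing $\cT_2\circ\cT_1$, and then to equip its state set with an order witnessing planarity. Write $\cT_i=(Q_i,\rho_i,q_0^i,F_i,\delta_i)$ for $i\in\{1,2\}$, and let $<_1,<_2$ be orders witnessing the planarity of $\cT_1,\cT_2$. The composite has state set $Q_1\times Q_2$ with direction map $\rho(p,q)=\rho_1(p)\cdot\rho_2(q)$; informally, the composite moves forward over the $\Sigma$-input exactly when $\cT_1$'s move and $\cT_2$'s scanning direction over the $\Gamma$-output agree, and it rewinds the simulation of $\cT_1$ (this is where reversibility is used) when they disagree. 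As correctness and reversibility are furnished by \cite{ReversibleTransducers}, the only remaining obligation is to check that the construction can be taken planar, for a suitable order on $Q_1\times Q_2$.

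First I would record the shape of the composite transition profiles. Reading a letter $a\in\Sigma$, a single transition of $\cT_1$ from a state $p$ emits an output word $w\in\Gamma^*$ and enters some state $p'$; along this \enquote{strand} the composite threads a simulation of $\cT_2$ reading $w$, governed by the element $\widehat{\delta_2}(w)=\delta_2(b_1)*\cdots*\delta_2(b_k)\in\behavmon(Q_2,\rho_2)$, where $w=b_1\cdots b_k$. By \Cref{thm:tlmonisaperiodic}, $\widehat{\delta_2}(w)$ is a product in $\tlmon(Q_2,\rho_2,<_2)$, hence is again deterministic and planar for $<_2$. Consequently the profile $\transgraph(\rho,\delta(a))$ is obtained from the planar profile $\transgraph(\rho_1,\delta_1(a))$ by substituting, into each of its strands, the planar diagram of the corresponding $\widehat{\delta_2}(w)$ --- precisely the passage from the top to the bottom row of \Cref{fig:compEx}.

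It then remains to pick the order and argue that the substitution keeps everything planar. I would order $Q_1\times Q_2$ lexicographically with the $\cT_1$-component as primary key, exactly as displayed in \Cref{fig:compEx}: $(p,q)<(p',q')$ iff $p<_1 p'$, or $p=p'$ and $q<_2 q'$. The resulting planarity claim is topologically intuitive, in the same spirit as the closure of $\tlmon$ under composition (\Cref{thm:tlmonisaperiodic}): thicken each strand of the planar diagram of $\delta_1(a)$ into a thin band and draw the planar diagram of $\widehat{\delta_2}(w)$ inside it; within a band the $\cT_2$-states inherit $<_2$ while the bands themselves follow $<_1$, so the boundary of the assembled figure carries exactly the lexicographic order on $(Q_1\times Q_2)\times\{-1,1\}$, and gluing a planar map inside each band of a planar map again yields a planar map. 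All the orientation flips between the two boundary sides are absorbed by the standard $\{-1,1\}$-extension of \Cref{def:planarity}.

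The main obstacle is, once again, upgrading this picture into a rigorous combinatorial statement. The cleanest route is via the correspondence between combinatorial maps and surface embeddings (cf.\ \Cref{rem:combinatorial-map}): one would formalize \enquote{substitution of one combinatorial map into the bands of another} and verify that it preserves genus $0$ and induces the lexicographic boundary order. The one genuinely new point beyond \Cref{thm:tlmonisaperiodic} is to check that each inner diagram $\widehat{\delta_2}(w)$ is plugged into its strand with the correct handedness --- i.e.\ that the end of $w$ at which $\cT_2$ enters matches the side of the strand dictated by $\rho$ --- but this is exactly the bookkeeping already certified correct by the construction of \cite{ReversibleTransducers}.
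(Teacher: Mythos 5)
Your construction, the choice of the lexicographic order on $Q_1\times Q_2$ with the $\cT_1$-component as primary key, and the direction map $\rho_1\rho_2$ all coincide with what the paper does; the only real divergence is in how planarity of the composite transitions is verified. The paper argues combinatorially: it supposes two edges $e_0,e_1$ of the composite profile violate \Cref{def:planarity}, and by a case analysis on whether the $\rho_2$-directions (resp.\ the $Q_1$-components) of the four endpoints agree, it projects the offending pair down to a forbidden configuration either in $\transgraph(\rho_1,\lfloor\delta_1\rfloor(a))$ or in $\transgraph(\rho_2,\lfloor\delta_2\rfloor(v))$, contradicting planarity of $\cT_1$ or of $\cT_2$. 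You instead give a global topological argument: thicken each strand of $\delta_1(a)$ into a band and substitute the planar diagram of $\widehat{\delta_2}(v)$ into it, invoking \Cref{thm:tlmonisaperiodic} to know that this inner diagram (a product of planar deterministic transitions) is itself planar. Both routes rely on \Cref{thm:tlmonisaperiodic} at the same point --- the paper's second case also needs planarity of the \emph{composed} behaviour $\lfloor\delta_2\rfloor(v)$ on a multi-letter output $v$ --- so neither is strictly more self-contained there. What the paper's case analysis buys is that the final crossing-freeness check is purely combinatorial from \Cref{def:planarity}, whereas your band-substitution step imports a second topological gluing argument of the same informal character as the one already used in \Cref{thm:tlmonisaperiodic} (and you correctly flag that making it rigorous would again go through combinatorial maps and the Heffter--Edmonds--Ringel correspondence); what your version buys is a more conceptual picture that makes the role of the handedness/orientation bookkeeping of \citep{ReversibleTransducers} explicit. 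I would accept your proof as correct at the paper's own level of rigour, but if you want a fully combinatorial argument you should replace the band picture by the paper's projection-based case analysis.
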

\begin{proof}
If $\cT_1 = (Q, \rho_1, q_0, F_1, \delta_1)$ and
$\cT_2 = (R, \rho_2, r_0, F_2, \delta_2)$, the composition is computed by
\[ \cT_2 * \cT_1 = (Q \times R, \rho_1\rho_2, (q_0, r_0), F_1 \times F_2, \delta)\]
  with $\delta : \Sigma_\bumpers \to \powerset((Q \times R) \times \Delta^* \times (Q \times R)$
  defined as a union $\deltalr \cup \deltarl \cup
  \deltall \cup \deltarr$, where each component is the smallest such that
  \begin{itemize}[leftmargin=*]
\item
$\deltalr(a) \ni ((q,r),w,(q',r'))$ if 
$\rho_2(r)=\rho_2(r')=1$,
$(\varepsilon,r, v)
\xrightarrow{\smash{\; w\;}}^*_{\delta_2}
(v,r',\varepsilon)$ and
$(q,v,q') \in \delta_1'(a)$
\item
$\deltarl(a) \ni ((q',r),w,(q,r'))$
if
$\rho_2(r) = \rho_2(r')=-1$,
$
(v,r, \varepsilon)
\xrightarrow{\smash{\; w\;}}^*_{\delta_2}
(\varepsilon,r',v)
$
and
$(q,v,q') \in \delta_1'(a)$
\item
$\deltall(a) \ni ((q,r),w,(q,r'))$
if
$\rho_2(r) = -\rho_2(r')=1$,
$
(\varepsilon,r,v)
\xrightarrow{\smash{\; w\;}}^*_{\delta_2}
(\varepsilon,r',v)
$
and
$(q,v,q') \in \delta_1'(a)$
\item
$\deltarr(a) \ni ((q',r),w,(q',r'))$
if
$\rho_2(r) =-\rho_2(r')=-1$,
$
(v,r,\varepsilon)
\xrightarrow{\smash{\; w\;}}^*_{\delta_2}
(v,r',\varepsilon)
$
and
$(q,v,q') \in \delta_1'(a)$
\end{itemize}
where $\delta_1' : \Sigma \to \powerset(Q \times \Gamma_\bumpers^* \times Q)$
is the obvious modification of $\delta_1$ required to surround the output of
$\cT_1$ by the end markers $\triangleright$ and $\triangleleft$.
The basic idea is that in each of the four cases, we observe the global
  movement of $\cT_2$ on the output produced by $\cT_1$ (determined by the
subscripts $\rightarrow, \leftarrow, \lefttorightarrow$ or $\righttoleftarrow$),
we move forward or backwards in the run of $\cT_1$
over the input accordingly.


The construction is exactly the same as in the proof of \citep[Theorem~1]{ReversibleTransducers};
it is sound and preserves both determinism and reversibility for exactly the same reasons.
It remains to check that the obtained transitions can still be made planar.
It is the case: if $\delta_1$
and $\delta_2$ induce planar transitions with respect to two orders over $Q$ and
  $R$ respectively, then
$\delta$ induce planar transitions with respect to their lexicographic order over
$Q \times R$ as per~\Cref{def:planarity}.

To do so, let us assume that we have two edges
$e_0 = \{(q_0,r_0),p_0), ((q_2,r_2),p_2)\} \in
\cG(\rho_1\rho_2,\lfloor \delta\rfloor r(a))$
and
$e_1 = \{(q_1,r_1),p_1), ((q_3,r_3),p_3)\} \in
\cG(\rho_1\rho_2,\lfloor \delta \rfloor(a))$
such that $((q_i, r_i), p_i) < ((q_j,r_j),p_j)$ whenever $i < j$.
We then proceed via a case analysis:
\begin{itemize}
\item If it is the case that every $\rho_2(r_i)$ coincide, then projecting the
  $R$ components yield edges either in $\cG(\rho_1, \lfloor\delta_1\rfloor(a))$.
  In that case, we contradict the planarity of $\delta_1(a)$.
\item If three of the $q_i$s coincide, by determinism and
      reversibility of $\cT_1$, it means that projecting the $Q$ components
      away yields edges in $\cG(\rho_2, \lfloor\delta_2\rfloor(v))$ where
      $v$ is the output of $\cT_1$ when it reads $a$ from state $q$. But
      then we contradict the planarity of transitions in $\cT_2$.
\item And with that, we have covered all the cases; the first item allows us
  to show that either $e_0$ or $e_1$ is in
$\cG(\rho_1\rho_2,\lfloor\deltall \cup \deltarr\rfloor(a))$.
If $e_0$ is an edge of $\cG(\rho_1\rho_2,\lfloor\deltall \cup \deltarr\rfloor(a))$,
it means in particular that $q_0 = q_2$ and $\rho_2(r_0) \neq \rho_2(r_2)$.
So necessarily $p_0 = p_1 = p_2$, so we further have $q_0 = q_1 = q_2$, so we
were actually in the second case.
The case where $e_1$ is an edge of
$\cG(\rho_1\rho_2,\lfloor\deltall \cup \deltarr\rfloor(a))$ is handled similarly.
\end{itemize}
\end{proof}

\subsection{Two-state aperiodic sequential transducers}

We now turn to proving the following.

\begin{lemma}
  \label{lem:flipflop}
Every two-state aperiodic sequential transducer can be computed by a
planar 2RFT.
\end{lemma}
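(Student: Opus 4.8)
The plan is to put the transducer into a \emph{flip-flop normal form}, then separate the state-dependent \emph{output} from the state-\emph{tracking} using the composition lemma, and finally build a reversible two-way machine for the tracking part whose head motion consists of properly nested excursions, so that planarity becomes transparent. For the normal form, write the state set as $\{0,1\}$ with $0$ initial. Each letter $a$ induces a transformation of $\{0,1\}$, and the only available transformations are the identity, the two constant (reset) maps, and the transposition. The transposition generates a copy of $\mathbb{Z}/2\mathbb{Z}$, so it would make the transition monoid non-aperiodic; since an aperiodic sequential transducer has by definition an aperiodic transition monoid, the transposition is excluded. Hence every letter acts as $\mathrm{id}$, $\mathrm{reset}_0$ or $\mathrm{reset}_1$, i.e.\ the machine is a genuine \emph{flip-flop}: the state reached after a prefix is the value written by the last reset letter in that prefix (or $0$ if there is none). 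The output on reading $a$ in state $q$ is some word $\gamma_a(q)$, and the domain is cut out by a final condition $q\in F\subseteq\{0,1\}$.

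Next I would peel off the output. Let $h\colon\Sigma^*\to(\Sigma\times\{0,1\})^*$ be the \emph{annotation} sending $w$ to the word whose $i$-th letter is $(w_i,q_i)$, where $q_i$ is the state just before position $i$. The single-state one-way transducer $m$ mapping each letter $(a,q)$ to $\gamma_a(q)$ is trivially reversible and planar, since with a single state no transition profile can contain a crossing, and our function factors as $m\circ h$ up to the final-state restriction, which is itself a simple reversible planar scan. By \Cref{lem:revPlanarCompo} it therefore suffices to realise the annotation $h$ by a planar 2RFT.

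The core step is to build a \emph{reversible} two-way transducer for $h$. The obvious one-way transducer for $h$ carries the current value rightwards and updates it at reset letters; it is deterministic but not reversible, precisely because at a reset letter the two incoming values are merged. To restore reversibility without leaving the planar world, I would invoke the standard simulation of a deterministic (a fortiori one-way) machine by a reversible two-way one, whose existence is classical (see the references in the introduction), but realise it \emph{concretely} as an Euler tour of the backward computation tree. For a flip-flop this tree branches only at reset letters, and only in a binary fashion, so the traversal can be arranged so that the excursion fired at each reset stays within the span reaching back to the previous reset; consequently all excursions are \emph{properly nested}. Taking the order on the enlarged state set that places every backward excursion state immediately beside its forward partner, each transition profile then contains only straight forward and backward edges together with, at reset letters, these nested turn-around edges, and one checks against \Cref{def:planarity} that no forbidden sequence $u<r<v<s$ arises.

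I expect the main obstacle to be exactly this last verification: one must exhibit a \emph{single} total order on the state set that simultaneously planarises all four transition profiles \emph{while} the turn-around edges at reset letters remain genuine partial injections, as reversibility demands. The subtlety is concentrated at the reset letters, where a naive bounce either collides with another incoming branch, destroying injectivity, or crosses a straight edge, destroying planarity; the nested-excursion layout---morally the Temperley--Lieb phenomenon highlighted in the introduction---is what reconciles the two, and pinning down this layout and checking the no-crossing condition is the heart of the argument. Once a planar 2RFT for $h$ is in hand, \Cref{lem:revPlanarCompo} assembles $m\circ h$, which computes the original function.
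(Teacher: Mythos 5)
Your overall plan is aligned with the paper's --- flip-flop normal form (transposition excluded by aperiodicity), output read off from the state annotation, and a reversible two-way machine whose backward excursions run from each reset letter back to the previous one --- but there is a genuine gap exactly where you locate it yourself: the planar reversible machine for the state-tracking part is never actually constructed. You appeal to \enquote{the standard simulation of a deterministic machine by a reversible two-way one}, but that generic construction carries no planarity guarantee, and your description (\enquote{Euler tour of the backward computation tree}, \enquote{backward excursion state immediately beside its forward partner}) does not pin down a state set, a transition table, or a single total order; you then concede that \enquote{pinning down this layout and checking the no-crossing condition is the heart of the argument}. A proof cannot simultaneously assert that the verification succeeds and leave the object to be verified unspecified --- especially since the two failure modes you correctly identify at reset letters (loss of injectivity from merging branches, or a crossing from a naive bounce) are avoided only by a specific choice that you do not make. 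A small further slip: the paper's sequential transducers are total with a final output function $o$, not a final-state set, so the right end-marker must also emit $o(i)$.

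For comparison, the paper resolves the core step with an explicit five-state machine ordered $q_1^\rightarrow < r_1^\leftarrow < s^\rightarrow < r_2^\leftarrow < q_2^\rightarrow$ (\Cref{fig:flipFlopTransTrans}): the forward run tracks the current state in $q_i^\rightarrow$ and, at every reset letter, U-turns into $r_i^\leftarrow$ while emitting its output; the backward state $r_i^\leftarrow$ passes straight through identity letters and may turn around into the neutral forward state $s^\rightarrow$ only at a letter resetting to $i$ (or at $\triangleright$ for $i=1$), which is precisely how the two branches are merged injectively; $s^\rightarrow$ then crosses the reset letter into the new $q_j^\rightarrow$. Placing $r_1$ and $r_2$ on \emph{opposite} sides of $s$ is the non-obvious choice that makes all the transition profiles simultaneously planar (consecutive excursions are adjacent rather than nested), and correctness follows by induction on prefixes. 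Your factorization $m\circ h$ through the annotation function is harmless --- the single-state one-way $m$ is trivially planar and reversible and \Cref{lem:revPlanarCompo} applies --- but it buys nothing, since all of the difficulty is concentrated in realizing $h$, which is essentially the machine above minus its outputs.
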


Before embarking on this proof, let us recall what is an aperiodic
sequential transducer.

\begin{definition}
  \label{def:seqTrans}
  A (total deterministic) \emph{sequential transducer} with input alphabet $\Sigma$ and output
alphabet $\Gamma$ is a tuple $(Q,q_0,\partial,o)$ where $Q$ is a finite
  set of states, $q_0 \in Q$ is an initial state, $o\colon Q \to \Gamma^*$ is a \emph{final output}
  function and $\partial : \Sigma \to Q \to Q \times \Gamma^*$ is a transition
  function. Let us write $\partial_1\colon \Sigma \to Q \to Q$ and
  $\partial_2\colon \Sigma \to Q \to \Gamma^*$ for each component of $\partial$.

The function $\partial$ is extended to $\partial^*\colon \Sigma^* \to Q \to Q \times \Gamma^*$
defined by its components $\partial_1^*$ and $\partial_2^*$:
\[\partial^*(\varepsilon)(q) = (q, \varepsilon) \qquad
  \partial_1^*(w a) = \partial_1(a) \circ \partial_1^*(w) \quad
  \text{and} \quad \partial_2^*(wa)(q) = \partial_2^*(w)\partial_2(a)\left(\partial_1^*(w)(q)\right)\]

Such a sequential transducer then defines a function $\Sigma^* \to \Gamma^*$
  by $w \mapsto \partial_2^*(w)\cdot o(\partial_1^*(q_0))$.

This sequential transducer is said to be \emph{aperiodic} if the monoid
generated by the transition function, which is the
closure of $\{ \partial_1(a) \mid a \in \Sigma\}$ under composition, is aperiodic.
\end{definition}

An example of a two-state aperiodic transducer is pictured in~\Cref{fig:2StApTr}.
One can see that the monoid generated by its transition function is actually
the (unique) maximal aperiodic submonoid of $\{1,2\}^{\{1,2\}}$ obtained
by removing the only non-trivial bijection.

We now have all the definitions required to embark in the construction that
will prove~\Cref{lem:flipflop}; before giving the details, one may want
to take a look at the translation of the transducer of~\Cref{fig:2StApTr},
which is ran over an input in~\Cref{fig:Ex2StApTr22DFT}.

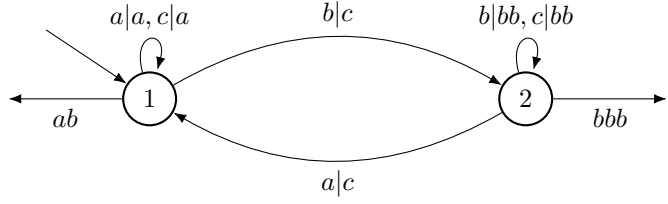
\begin{figure}
  \begin{center}
    \begin{tikzpicture}[>=Latex]
      \node (init) at (-0.5,2) {};
      \node (outa) at (-1,1) {};
      \node (outb) at (8,1) {};
      \node[circle,draw,thick,minimum size=20pt] (qa) at (1,1) {$1$};
      \node[circle,draw,thick,minimum size=20pt] (qb) at (6,1) {$2$};
      
      {\draw[->] (init) -- (qa);}
      {\draw[->] (qa) edge [loop above] node {$a|a, c|a$} (); }
      {\draw[->] (qa) edge  [bend left] node[above] {$b|c$} (qb);}
      {\draw[->] (qb) edge [loop above] node {$b|bb, c|bb$} ();}
      {\draw[->] (qb) edge [bend left] node[below] {$a|c$} (qa);}
      {\draw[->] (qa) edge node[below] {$ab$} (outa);}
      {\draw[->] (qb) edge node[below] {$bbb$}(outb);}
    \end{tikzpicture}
  \end{center}
  \caption{A two-state aperiodic sequential transducer over the alphabet
  $\{a,b,c\}$
  A transition $\partial(a)(q) = (q',w)$
  is pictured as an arrow from $q$ to $q'$ labelled by $a | w$.}
  \label{fig:2StApTr}
\end{figure}

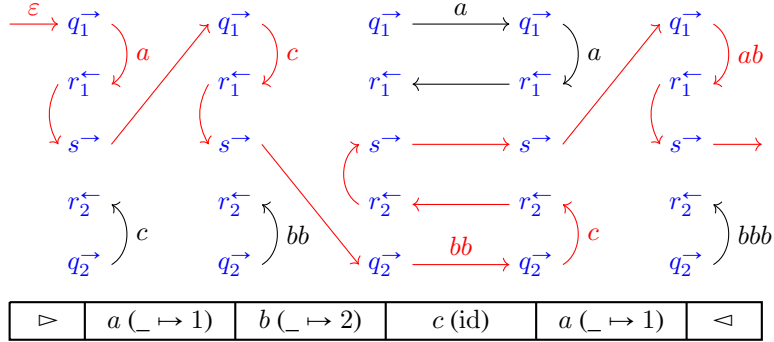
\begin{figure}
  \begin{center}
    \begin{tikzpicture}
      \draw[thick] (-1,0.5) -- (9,0.5) -- (9,0) -- (-1,0) -- (-1,0.5);
      \draw[thick] (0,0) -- (0,0.5);
      \draw[thick] (2,0) -- (2,0.5);
      \draw[thick] (4,0) -- (4,0.5);
      \draw[thick] (6,0) -- (6,0.5);
      \draw[thick] (8,0) -- (8,0.5);
      \node at (-0.5,0.25) {$\triangleright$};
      \node at (1,0.25) {$a$ ($\_ \mapsto 1$)};
      \node at (3,0.25) {$b$ ($\_ \mapsto 2$)};
      \node at (5,0.25) {$c$ ($\mathrm{id}$)};
      \node at (7,0.25) {$a$ ($\_ \mapsto 1$)};
      \node at (8.5,0.25) {$\triangleleft$};

      \node[color=blue] (q15) at (8,4.2) {$q_1^\rightarrow$};
      \node[color=blue] (r15) at (8,3.4) {$r_1^\leftarrow$};
      \node[color=blue] (s5) at (8,2.6) {$s^\rightarrow$};
      \node[color=blue] (r25) at (8,1.8) {$r_2^\leftarrow$};
      \node[color=blue] (q25)  at (8,1) {$q_2^\rightarrow$};

      {\color{red}\draw[->] (s5) -- (9,2.6);
      \draw[->] (q15.east) to [bend left = 60] node [right] {$ab$} (r15.east);}
      \draw[->] (q25.east) to [bend right = 60] node [right] {$bbb$} (r25.east);

        \node[color=blue] (q14) at (6,4.2) {$q_1^\rightarrow$};
        \node[color=blue] (r14) at (6,3.4) {$r_1^\leftarrow$};
        \node[color=blue] (s4) at (6,2.6) {$s^\rightarrow$};
        \node[color=blue] (r24) at (6,1.8) {$r_2^\leftarrow$};
        \node[color=blue] (q24)  at (6,1) {$q_2^\rightarrow$};

        {\color{red} \draw[->] (r15.west) to [bend right = 30] (s5.west);
        \draw[->] (s4.east) to (q15.west);
        \draw[->] (q24.east) to [bend right = 60] node [right] {$c$} (r24.east);}
        \draw[->] (q14.east) to [bend left = 60] node [right] {$a$} (r14.east);

        \node[color=blue] (q13) at (4,4.2) {$q_1^\rightarrow$};
        \node[color=blue] (r13) at (4,3.4) {$r_1^\leftarrow$};
        \node[color=blue] (s3) at (4,2.6) {$s^\rightarrow$};
        \node[color=blue] (r23) at (4,1.8) {$r_2^\leftarrow$};
        \node[color=blue] (q23)  at (4,1) {$q_2^\rightarrow$};

        \draw[->] (q13) -- node[above] {$a$} (q14);
        \draw[->] (r14) -- (r13);
        {\color{red}\draw[->] (q23) -- node[above] {$bb$} (q24);
        \draw[->] (r24) -- (r23);
        \draw[->] (s3) -- (s4);}

        \node[color=blue] (q12) at (2,4.2) {$q_1^\rightarrow$};
        \node[color=blue] (r12) at (2,3.4) {$r_1^\leftarrow$};
        \node[color=blue] (s2) at (2,2.6) {$s^\rightarrow$};
        \node[color=blue] (r22) at (2,1.8) {$r_2^\leftarrow$};
        \node[color=blue] (q22)  at (2,1) {$q_2^\rightarrow$};

        {\color{red}\draw[->] (r23.west) to [bend left = 60] (s3.west);
        \draw[->] (s2.east) to (q23.west);
        \draw[->] (q12.east) to [bend left = 60] node [right] {$c$} (r12.east);}
        \draw[->] (q22.east) to [bend right = 60] node [right] {$bb$} (r22.east);

        \node[color=blue] (q11) at (0,4.2) {$q_1^\rightarrow$};
        \node[color=blue] (r11) at (0,3.4) {$r_1^\leftarrow$};
        \node[color=blue] (s1) at (0,2.6) {$s^\rightarrow$};
        \node[color=blue] (r21) at (0,1.8) {$r_2^\leftarrow$};
        \node[color=blue] (q21)  at (0,1) {$q_2^\rightarrow$};

        \draw[->] (q21.east) to [bend right = 60] node [right] {$c$} (r21.east);
          {\color{red}\draw[->] (-1,4.2) -- node[above] {${\vareps}$} (q11);
            \draw[->] (q11.east) to [bend left = 60] node [right] {$a$} (r11.east);
            \draw[->] (r11.west) to [bend right = 30] (s1.west);
            \draw[->] (s1.east) to (q12.west);
            \draw[->] (r12.west) to [bend right = 30] (s2.west);}
    \end{tikzpicture}
  \end{center}
\caption{A run of the 2DFT obtained by the construction in the proof
of~\Cref{lem:flipflop} applied to the sequential transducer pictured
in~\Cref{fig:2StApTr}.}
\label{fig:Ex2StApTr22DFT}
\end{figure}

\begin{proof}[of~\Cref{lem:flipflop}]
Without loss of generality, suppose that we start with an input transducer
  of shape $(\{1,2\}, 1, \partial, o)$ with input alphabet $\Sigma$ and output
  alphabet $\Gamma$.
  We build a planar reversible 2DFT with the directed state-space
  $\{q_1^\rightarrow, r_1^\leftarrow, s^\rightarrow, r_2^\rightarrow, q_2^\rightarrow\}$,
  the initial state $q_1^\rightarrow$ and final state $s^\rightarrow$.
  Its transition function $\delta$ is then defined as per the table given
  in~\Cref{fig:flipFlopTransTrans}: note that for $a \in \Sigma$, $\delta(a)$
  is determined by $\partial(a)$, $\delta(\triangleright)$ is fixed (essentially
  because we fixed that $1$ was the initial state of our input sequential transducer)
  and $\delta(\triangleleft)$ is determined by $o$.

  As a visual inspection of \Cref{fig:flipFlopTransTrans} reveals, all those transitions are
  reversible and planar for the ordering $q_1^\rightarrow < r_1^\leftarrow < s^\rightarrow < r_2^\rightarrow < q_2^\rightarrow$ on the
  states.

  Now it only remains to be checked that the partial functions compted by the two
  machines are the same. The proof of this basically relies on the following
  claim that holds by induction over $w \in \Sigma^*$.
\begin{center}
  If $\partial^*(w)(1) = (i,u)$, then
  $(\varepsilon, q_1^\rightarrow, \triangleright w)
  \xrightarrow{\smash{\; u \;}}^*_{\delta}
  (\triangleright w,q_i^\rightarrow,\varepsilon)$ and
  $(\triangleright w,r_i^\leftarrow,\varepsilon)
  \xrightarrow{\smash{\; \varepsilon \;}}^*_{\delta}
(\triangleright w,s^\rightarrow,\varepsilon)$
\end{center}
  With that proven, we may build the following run over a word $w$
\[
  (\varepsilon, q_1^\rightarrow, \triangleright w \triangleleft)
  \xrightarrow{\smash{\; \partial_2^*(w)(1) \;}}^*_{\delta}
  (\triangleright w, q_i^\rightarrow, \triangleleft)
  \xrightarrow{\smash o(i)}_\delta
  (\triangleright w, r_i^\leftarrow, \triangleleft)
  \xrightarrow{\smash{\; \varepsilon \;}}^*_{\delta}
(\triangleright w,s^\rightarrow,\triangleleft)
  \xrightarrow{\smash{\; \varepsilon \;}}_{\delta}
(\triangleright w \triangleleft,s^\rightarrow,\varepsilon)
\]
  where $i = \partial_1^*(w)(1)$, which allows to conclude.
\end{proof}

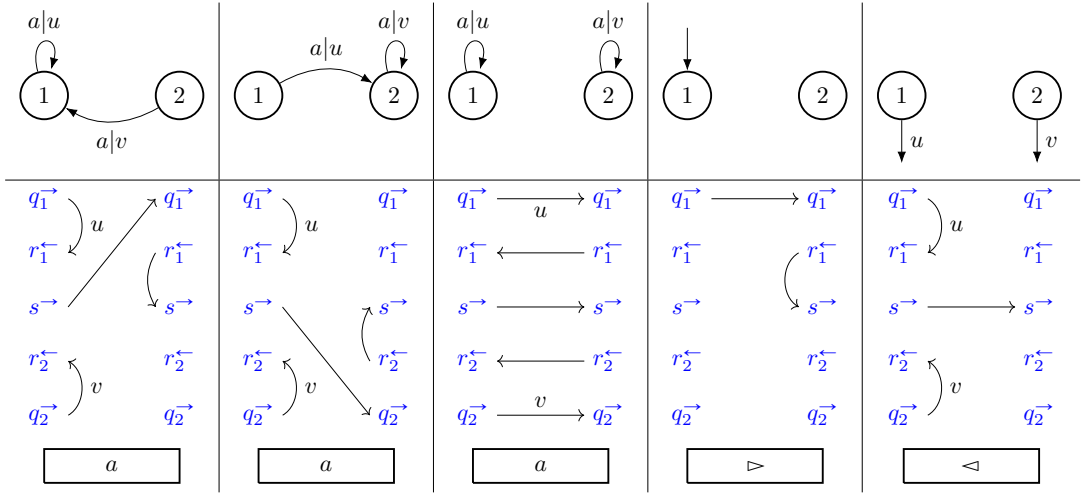
\begin{figure}
  \begin{center}
  \scalebox{0.9}{
  \begin{tabular}{c|c|c|c|c}
    \begin{tikzpicture}[>=Latex]
      \node[draw=none] () at (1,0) {};
      \node[circle,draw,thick,minimum size=20pt] (qa) at (1,1) {$1$};
      \node[circle,draw,thick,minimum size=20pt] (qb) at (3,1) {$2$};
      
      {\draw[->] (qa) edge [loop above] node {$a|u$} (); }
      {\draw[->] (qb) edge [bend left] node[below] {$a|v$} (qa);}
    \end{tikzpicture}
    &
    \begin{tikzpicture}[>=Latex]
      \node[draw=none] () at (1,0) {};
      \node[circle,draw,thick,minimum size=20pt] (qa) at (1,1) {$1$};
      \node[circle,draw,thick,minimum size=20pt] (qb) at (3,1) {$2$};
      
      {\draw[->] (qa) edge  [bend left] node[above] {$a|u$} (qb);}
      {\draw[->] (qb) edge [loop above] node {$a|v$} ();}
    \end{tikzpicture}
    &
    \begin{tikzpicture}[>=Latex]
      \node[draw=none] () at (1,0) {};
      \node[circle,draw,thick,minimum size=20pt] (qa) at (1,1) {$1$};
      \node[circle,draw,thick,minimum size=20pt] (qb) at (3,1) {$2$};
      
      {\draw[->] (qa) edge [loop above] node {$a|u$} (); }
      {\draw[->] (qb) edge [loop above] node {$a|v$} ();}
    \end{tikzpicture}
    &
    \begin{tikzpicture}[>=Latex]
      \node[draw=none] () at (1,0) {};
      \node[circle,draw,thick,minimum size=20pt] (qa) at (1,1) {$1$};
      \node[circle,draw,thick,minimum size=20pt] (qb) at (3,1) {$2$};

      {\draw[->] (1,2) to (qa);}     
    \end{tikzpicture}
    &
    \begin{tikzpicture}[>=Latex]
      \node[draw=none] () at (1,0) {};
      \node[circle,draw,thick,minimum size=20pt] (qa) at (1,1) {$1$};
      \node[circle,draw,thick,minimum size=20pt] (qb) at (3,1) {$2$};

      {\draw[->] (qa) to node[right] {$u$} (1,0);}
      {\draw[->] (qb) to node[right] {$v$} (3,0);}
    \end{tikzpicture}
\\
\hline
    \begin{tikzpicture}
        \node[color=blue] (q11) at (0,4.2) {$q_1^\rightarrow$};
        \node[color=blue] (r11) at (0,3.4) {$r_1^\leftarrow$};
        \node[color=blue] (s1) at (0,2.6) {$s^\rightarrow$};
        \node[color=blue] (r21) at (0,1.8) {$r_2^\leftarrow$};
        \node[color=blue] (q21)  at (0,1) {$q_2^\rightarrow$};

        \node[color=blue] (q12) at (2,4.2) {$q_1^\rightarrow$};
        \node[color=blue] (r12) at (2,3.4) {$r_1^\leftarrow$};
        \node[color=blue] (s2) at (2,2.6) {$s^\rightarrow$};
        \node[color=blue] (r22) at (2,1.8) {$r_2^\leftarrow$};
        \node[color=blue] (q22)  at (2,1) {$q_2^\rightarrow$};

        \draw[->] (q21.east) to [bend right = 60] node [right] {$v$} (r21.east);
            \draw[->] (q11.east) to [bend left = 60] node [right] {$u$} (r11.east);
            \draw[->] (s1.east) to (q12.west);
            \draw[->] (r12.west) to [bend right = 30] (s2.west);
      \draw[thick] (0,0) -- (0,0.5) -- (2,0.5) -- (2,0) -- (0,0);
      \node at (1,0.25) {$a$};
    \end{tikzpicture}
    &
    \begin{tikzpicture}
        \node[color=blue] (q11) at (0,4.2) {$q_1^\rightarrow$};
        \node[color=blue] (r11) at (0,3.4) {$r_1^\leftarrow$};
        \node[color=blue] (s1) at (0,2.6) {$s^\rightarrow$};
        \node[color=blue] (r21) at (0,1.8) {$r_2^\leftarrow$};
        \node[color=blue] (q21)  at (0,1) {$q_2^\rightarrow$};

        \node[color=blue] (q12) at (2,4.2) {$q_1^\rightarrow$};
        \node[color=blue] (r12) at (2,3.4) {$r_1^\leftarrow$};
        \node[color=blue] (s2) at (2,2.6) {$s^\rightarrow$};
        \node[color=blue] (r22) at (2,1.8) {$r_2^\leftarrow$};
        \node[color=blue] (q22)  at (2,1) {$q_2^\rightarrow$};

        \draw[->] (q21.east) to [bend right = 60] node [right] {$v$} (r21.east);
            \draw[->] (q11.east) to [bend left = 60] node [right] {$u$} (r11.east);
            \draw[->] (s1.east) to (q22.west);
            \draw[->] (r22.west) to [bend left = 30] (s2.west);
      \draw[thick] (0,0) -- (0,0.5) -- (2,0.5) -- (2,0) -- (0,0);
      \node at (1,0.25) {$a$};
    \end{tikzpicture}
&
    \begin{tikzpicture}
        \node[color=blue] (q11) at (0,4.2) {$q_1^\rightarrow$};
        \node[color=blue] (r11) at (0,3.4) {$r_1^\leftarrow$};
        \node[color=blue] (s1) at (0,2.6) {$s^\rightarrow$};
        \node[color=blue] (r21) at (0,1.8) {$r_2^\leftarrow$};
        \node[color=blue] (q21)  at (0,1) {$q_2^\rightarrow$};

        \node[color=blue] (q12) at (2,4.2) {$q_1^\rightarrow$};
        \node[color=blue] (r12) at (2,3.4) {$r_1^\leftarrow$};
        \node[color=blue] (s2) at (2,2.6) {$s^\rightarrow$};
        \node[color=blue] (r22) at (2,1.8) {$r_2^\leftarrow$};
        \node[color=blue] (q22)  at (2,1) {$q_2^\rightarrow$};

        \draw[->] (s1.east) to (s2.west);
        \draw[->] (q11.east) to node [below] {$u$} (q12.west);
        \draw[->] (q21.east) to node [above] {$v$} (q22.west);
        \draw[->] (r12.west) to (r11.east);
        \draw[->] (r22.west) to (r21.east);
      \draw[thick] (0,0) -- (0,0.5) -- (2,0.5) -- (2,0) -- (0,0);
      \node at (1,0.25) {$a$};
    \end{tikzpicture}
&
    \begin{tikzpicture}
        \node[color=blue] (q11) at (0,4.2) {$q_1^\rightarrow$};
        \node[color=blue] (r11) at (0,3.4) {$r_1^\leftarrow$};
        \node[color=blue] (s1) at (0,2.6) {$s^\rightarrow$};
        \node[color=blue] (r21) at (0,1.8) {$r_2^\leftarrow$};
        \node[color=blue] (q21)  at (0,1) {$q_2^\rightarrow$};

        \node[color=blue] (q12) at (2,4.2) {$q_1^\rightarrow$};
        \node[color=blue] (r12) at (2,3.4) {$r_1^\leftarrow$};
        \node[color=blue] (s2) at (2,2.6) {$s^\rightarrow$};
        \node[color=blue] (r22) at (2,1.8) {$r_2^\leftarrow$};
        \node[color=blue] (q22)  at (2,1) {$q_2^\rightarrow$};

        \draw[->] (q11.east) to (q12.west);
        \draw[->] (r12.west) to [bend right = 60] (s2.west);
      \draw[thick] (0,0) -- (0,0.5) -- (2,0.5) -- (2,0) -- (0,0);
      \node at (1,0.25) {$\triangleright$};
    \end{tikzpicture}
&
    \begin{tikzpicture}
        \node[color=blue] (q11) at (0,4.2) {$q_1^\rightarrow$};
        \node[color=blue] (r11) at (0,3.4) {$r_1^\leftarrow$};
        \node[color=blue] (s1) at (0,2.6) {$s^\rightarrow$};
        \node[color=blue] (r21) at (0,1.8) {$r_2^\leftarrow$};
        \node[color=blue] (q21)  at (0,1) {$q_2^\rightarrow$};

        \node[color=blue] (q12) at (2,4.2) {$q_1^\rightarrow$};
        \node[color=blue] (r12) at (2,3.4) {$r_1^\leftarrow$};
        \node[color=blue] (s2) at (2,2.6) {$s^\rightarrow$};
        \node[color=blue] (r22) at (2,1.8) {$r_2^\leftarrow$};
        \node[color=blue] (q22)  at (2,1) {$q_2^\rightarrow$};

        \draw[->] (s1.east) to (s2.west);
        \draw[->] (q21.east) to [bend right = 60] node [right] {$v$} (r21.east);
        \draw[->] (q11.east) to [bend left = 60] node [right] {$u$} (r11.east);
      \draw[thick] (0,0) -- (0,0.5) -- (2,0.5) -- (2,0) -- (0,0);
      \node at (1,0.25) {$\triangleleft$};
    \end{tikzpicture}

\\
  \end{tabular}}
  \end{center}
  \caption{Definition of the translation from an aperiodic sequential
  transducer to a transition function for a 2DFT computing the same function.
  The first row represents features of the sequential transducer and the
  second a corresponding transition in the obtained 2DFT. In the first three
  columns, transitions $\partial(a)$ are depicted in diagramatic form, then
  the initial state in the fourth column and the output function in the last
  column.}
  \label{fig:flipFlopTransTrans}
\end{figure}

\subsection{Monotone register transducers}

We now prove the final lemma required to conclude the section.
\begin{lemma}
\label{lem:mrt}
Any string-to-string function computed by a monotone register
transducer may be computed by a planar reversible 2DFT.
\end{lemma}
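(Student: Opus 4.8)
The plan is to give a direct construction of a planar 2RFT simulating a monotone register transducer $M$, in the same spirit as the encoding used for \Cref{lem:flipflop}. Fix $M$ with registers $R_1 < \dots < R_k$ (the linear order witnessing monotonicity) and a copyless, order-preserving update $\sigma_a$ for each letter $a$; we may assume the output concatenates the registers in order, the general case being handled at the right endmarker as in \Cref{lem:flipflop}. The semantics of $M$ can be read as an \emph{in-order, left-to-right yield of a forest} obtained by unfolding the updates backward through time: to emit the content of $R_j$ after reading $t$ letters, one emits the constant factors of $\sigma_{a_t}(R_j)$ in order and, between them, recursively emits the contents of the registers referenced by $\sigma_{a_t}(R_j)$ \emph{as they stood after $t-1$ letters}. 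Monotonicity is precisely the hypothesis that these references respect the order $R_1 < \dots < R_k$, and copylessness guarantees that each register occurs at most once in the whole of $\sigma_a$.

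I would build $\cT$ so that its run performs exactly this depth-first traversal, with the input head playing the role of the time coordinate $t$: expanding a referenced register moves the head one cell to the left (to time $t-1$), while finishing a register's content returns it to the right. The state space consists, for each register $R_j$, of a constant number of directed copies recording which register is currently being produced and whether the head is \emph{entering} its content (a leftward/expanding phase) or \emph{leaving} it (a rightward/returning phase); after an initial output-free left-to-right sweep to the right endmarker, the traversal begins. The transition on each letter $a$ is read off $\sigma_a$ exactly as the tables of \Cref{fig:flipFlopTransTrans} are read off $\partial$, the emitted output being the constant factors of $\sigma_a$. Here copylessness is what keeps the construction finite-state: a \enquote{just finished $R_i$} state at time $t-1$, returning right to time $t$, identifies the \emph{unique} occurrence of $R_i$ in $\sigma_{a_t}$ and hence determines the parent register and the next sibling with no auxiliary stack. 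Correctness then follows by an induction over the input mirroring the unfolding of $\partial^*$, analogous to the claim proved inside \Cref{lem:flipflop}.

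The two remaining properties are planarity and reversibility, and I expect planarity to be the crux. Ordering the states by the register order $R_1 < \dots < R_k$ (refined by the entering/leaving phase), I would argue that each transition profile $\transgraph(\rho,\delta(a))$ is non-crossing: a crossing would exhibit two references $R_i,R_{i'}$ inside $\sigma_a$ whose relative positions are inverted, i.e.\ exactly a violation of the monotonicity of $\sigma_a$. Thus monotonicity is used precisely, and only, to establish planarity; dropping it would permit inverted references and genuine crossings. Reversibility is then a comparatively routine check: single-state determinism of $M$ gives a forward partial function, and copylessness — each register being the target of at most one reference in $\sigma_a$ — makes the backward map equally deterministic, so each $\delta(a)$ is a partial injection. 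The main technical burden will be the simultaneous bookkeeping of directions, the head oscillations produced by interleaving constant factors with recursive expansions, and the boundary behaviour at the endmarkers, so that the traversal is genuinely realized by a reversible planar transition table; as in the previous subsections, this is best displayed by a figure exhibiting the transitions for a generic update alongside the induced diagram.
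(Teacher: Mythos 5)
Your proposal is correct and is essentially the paper's own construction: states are pairs (register, entering/leaving) plus an initial sweeping state, the head position plays the role of time, the run performs the depth-first traversal of the backward-unfolded updates, copylessness yields reversibility and monotonicity yields planarity via the register order refined by phase. The only cosmetic difference is that you normalize the output to a concatenation of all registers, whereas the paper's monotone register transducers already output a single (minimal) register, so no such normalization is needed.
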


Before proving this, we first need to define what is a
\emph{monotone register transducer}. This will
correspond to a restricted subclass of
\emph{copyless streaming string transducers} \citep{SST}.
Those machines go through their inputs in a single left-to-right pass,
storing infixes of their outputs in registers that they may update by
performing concatenations of previously stored values and constants.
Here we will further impose that our machines have no control states,
that the output corresponds to a single register and that the register
updates satisfy a monotonicity condition in addition to being copyless.

\begin{definition}\label{def:update}
  A \emph{register update} for the set of registers $R$ over the alphabet
  $\Gamma$ is a map $\sigma\colon R\to(R\cup\Gamma)^*$ (we assume
  $R\cap\Gamma=\varnothing$). It induces the map $\sigma^\dagger\colon
  (\Gamma^*)^R \to (\Gamma^*)^R$ defined this way: the $r$-component of the
  output is obtained by replacing, in $\sigma(r)$, each register occurrence $r'$
  by the $r'$-component of the input.
\end{definition}
We shall use a suggestive notation using an assignment symbol for register
updates: we write
\[ \sigma =
  \begin{cases}
    X \setreg XbYc\\
    Y \setreg abZ\\
    Z \setreg c
  \end{cases} \quad\text{for}\qquad \sigma(X)= XbYc,\; \sigma(Y) = abZ,\;
  \sigma(Z) = c
\]
(with $R = \{X,Y\}$ and $a,b,c\in\Gamma$). For this update, we have
$(\sigma^\dagger(\overrightarrow{w}))_X = w_X b w_Y c$ for any
$\overrightarrow{w}\in(\Gamma^*)^R$.
Now to limit ourselves to regular first-order transductions, we need
to apply some restrictions to the kind of register updates we have.
\begin{definition}
  Let $R = \{r_1 < \dots < r_n\}$ be a finite and totally ordered set of
  registers. A register update $\sigma\colon R\to(R\cup\Gamma)^*$ is
  \emph{copyless monotone} when the list of register occurrences in the
  concatenation $\sigma_a(r_1) \dots \sigma_a(r_n)$ is increasing (i.e.\
  strictly monotone).
\end{definition}
Our previous example of register update is copyless monotone for $X < Y < Z$:
the occurrences of registers in $XbYc \cdot abZ \cdot c$ form the increasing
sequence $X,Y,Z$. Some examples of register updates that do not satisfy this
condition include
\[ X \setreg XX \qquad
  \begin{cases}
    X \setreg Y\\
    Y \setreg X
  \end{cases}
\]
\begin{definition}
  A \emph{monotone register transducer} $\Sigma^*\to\Gamma^*$
  consists of the following:
  \begin{itemize}
  \item A finite totally ordered set of registers $R$.
  \item For each input letter $a\in\Sigma$, a copyless monotone register update
    $\sigma_a$ for $R$ over $\Gamma$.
  \end{itemize}
  It computes the function $a_1 \dots a_n \in \Gamma^* \mapsto
  (\sigma_{a_n}^\dagger \circ \dots \circ
  \sigma_{a_1}^\dagger(\overrightarrow\vareps))_X$ where
  $\overrightarrow\vareps\in(\Gamma^*)^R$ is the tuple whose components are
  all equal to the empty word, and $X$ is the minimum element of $R$.
\end{definition}

We can now turn to proving that they can be simulated by planar reversible
transducers; this will be a simpler variation of the construction found in the
proof of~\cite[Theorem~6]{AperiodicSST}
based on their notion of \emph{output structure} that determines how to
translate register updates into two-way behaviours; an example of applying
it is given in~\Cref{fig:mrt2dft}.

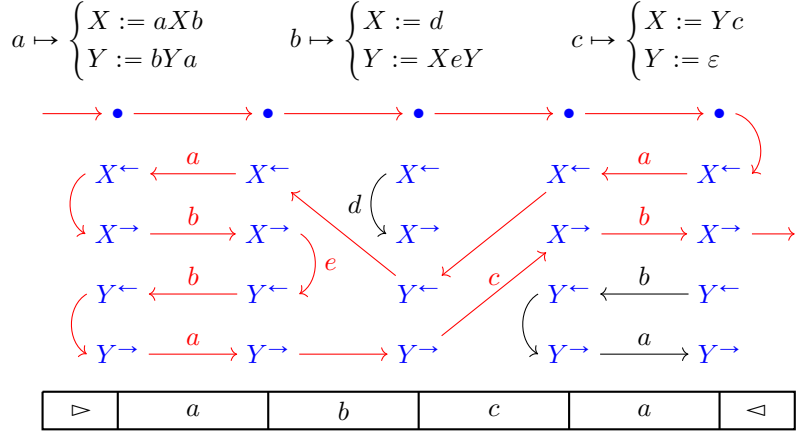
\begin{figure}
\[
  a \mapsto \begin{cases}
              X \setreg aXb\\
              Y \setreg bYa
            \end{cases} \qquad
  b \mapsto \begin{cases}
               X \setreg d \\
               Y \setreg XeY
            \end{cases} \qquad
  c \mapsto \begin{cases}
              X \setreg Yc \\
              Y \setreg \vareps
            \end{cases} \qquad
\]

  \begin{center}
    \begin{tikzpicture}
      \draw[thick] (-1,0.5) -- (9,0.5) -- (9,0) -- (-1,0) -- (-1,0.5);
      \draw[thick] (0,0) -- (0,0.5);
      \draw[thick] (2,0) -- (2,0.5);
      \draw[thick] (4,0) -- (4,0.5);
      \draw[thick] (6,0) -- (6,0.5);
      \draw[thick] (8,0) -- (8,0.5);
      \node at (-0.5,0.25) {$\triangleright$};
      \node at (1,0.25) {$a$};
      \node at (3,0.25) {$b$};
      \node at (5,0.25) {$c$};
      \node at (7,0.25) {$a$};
      \node at (8.5,0.25) {$\triangleleft$};

       \node[color=blue] (B5) at (8, 4.2) {$\bullet$};
      \node[color=blue] (Xi5) at (8,3.4) {$X^\leftarrow$};
      \node[color=blue] (Xo5) at (8,2.6) {$X^\rightarrow$};
      \node[color=blue] (Yi5) at (8,1.8) {$Y^\leftarrow$};
      \node[color=blue] (Yo5)  at (8,1) {$Y^\rightarrow$};

      {\color{red}\draw[->] (Xo5.east) to (9,2.6);
      \draw[->] (B5.east) to [bend left = 60] (Xi5.east);}

        \node[color=blue] (B4) at (6, 4.2) {$\bullet$};
        \node[color=blue] (Xi4) at (6,3.4) {$X^\leftarrow$};
        \node[color=blue] (Xo4) at (6,2.6) {$X^\rightarrow$};
        \node[color=blue] (Yi4) at (6,1.8) {$Y^\leftarrow$};
        \node[color=blue] (Yo4)  at (6,1) {$Y^\rightarrow$};

        {\color{red}\draw[->] (Xi5) -- node[above] {$a$} (Xi4);
        \draw[->] (Xo4) -- node[above] {$b$} (Xo5);
        \draw[->] (B4) -- (B5);}
        \draw[->] (Yi5) -- node[above] {$b$} (Yi4);
        \draw[->] (Yo4) -- node[above] {$a$} (Yo5);

        \node[color=blue] (B3) at (4, 4.2) {$\bullet$};
        \node[color=blue] (Xi3) at (4,3.4) {$X^\leftarrow$};
        \node[color=blue] (Xo3) at (4,2.6) {$X^\rightarrow$};
        \node[color=blue] (Yi3) at (4,1.8) {$Y^\leftarrow$};
        \node[color=blue] (Yo3)  at (4,1) {$Y^\rightarrow$};

        {\color{red}\draw[->] (Xi4) -- (Yi3);
        \draw[->] (Yo3) -- node[above] {$c$} (Xo4);
        \draw[->] (B3) -- (B4);
        }
        \draw[->] (Yi4.west) to [bend right = 60] (Yo4.west);

        \node[color=blue] (B2) at (2, 4.2) {$\bullet$};
        \node[color=blue] (Xi2) at (2,3.4) {$X^\leftarrow$};
        \node[color=blue] (Xo2) at (2,2.6) {$X^\rightarrow$};
        \node[color=blue] (Yi2) at (2,1.8) {$Y^\leftarrow$};
        \node[color=blue] (Yo2)  at (2,1) {$Y^\rightarrow$};

        {\color{red}
        \draw[->] (B2) -- (B3);
        \draw[->] (Yi3) -- (Xi2);
        \draw[->] (Xo2.east) to [bend left = 60] node[right] {$e$} (Yi2.east);
        \draw[->] (Yo2) -- (Yo3);
        }
        \draw[->] (Xi3.west) to [bend right = 60] node[left] {$d$} (Xo3.west);

        {\color{red}
        \node[color=blue] (B1) at (0, 4.2) {$\bullet$};
        \node[color=blue] (Xi1) at (0,3.4) {$X^\leftarrow$};
        \node[color=blue] (Xo1) at (0,2.6) {$X^\rightarrow$};}
        \node[color=blue] (Yi1) at (0,1.8) {$Y^\leftarrow$};
        \node[color=blue] (Yo1)  at (0,1) {$Y^\rightarrow$};

        {\color{red}
        \draw[->] (B1) -- (B2);
        \draw[->] (Xi2) -- node[above] {$a$} (Xi1);
        \draw[->] (Xo1) -- node[above] {$b$} (Xo2);
        \draw[->] (Yi2) -- node[above] {$b$} (Yi1);
        \draw[->] (Yo1) -- node[above] {$a$} (Yo2);}

        {\color{red}\draw[->] (-1,4.2) -- (B1);
        \draw[->] (Xi1.west) to [bend right = 60] (Xo1.west);
        \draw[->] (Yi1.west) to [bend right = 60] (Yo1.west);
        }
    \end{tikzpicture}
  \end{center}
\caption{A monotone register transducer and a run of the
corresponding planar reversible 2DFT (where we write $X^\leftarrow$ for $(X,-1)$ and
$X^\rightarrow$ for $(X,1)$ for readability).}
\label{fig:mrt2dft}
\end{figure}

\begin{proof}[of \Cref{lem:mrt}]
Given a monotone register transducer with an ordered set of registers $R$ and
  update function $\sigma$ and input alphabet $\Sigma$,
we will build a 2DFT with state space $\{\bullet\} \cup R \times \{-1,1\}$
(for some $\bullet \not\in R \times \{-1,1\}$).
The direction $\pol$ over states is going to be given by the second projection and
$\pol(\bullet) = 1$.
Its transition relation $\delta$ will be the smallest relation
satisfying the following, where $r_0$ is the smallest register of $R$:
\begin{itemize}
  \item $\{(\bullet, \varepsilon, \bullet), ((r,-1), \varepsilon, (r,1))\} \subseteq \delta(\triangleright)$
  and $\{(\bullet,\varepsilon, (r_0,-1)), ((r_0,1), \varepsilon, (r_0,1))\} \subseteq \delta(\triangleleft)$
  \item $((r,-1), \sigma(a), (r,1)) \in \delta(a)$ if $a \in \Sigma$ and $\sigma_a \in \Gamma^*$
  \item $((r,-1), w, (r',-1)) \in \delta(a)$ if $a \in \Sigma$ and $wr' \in \Gamma^*R$ is a prefix of $\sigma_a(r)$
  \item $((r,1), w, (r',-1)) \in \delta(a)$ if $a \in \Sigma$ and $rwr' \in R\Gamma^*R$ is an infix of $\sigma_a(r'')$ for some $r''$
  \item $((r,1), w, (r',1)) \in \delta(a)$ if $a \in \Sigma$ and $rw \in R\Gamma^*$ is a suffix of $\sigma_a(r')$
  \item $(\bullet, \varepsilon, \bullet) \in \delta(a)$ if $a \in \Sigma$
\end{itemize}

The basic idea is that we treat an entry in the state $(r,-1)$ as a query of what
is the value stored in the register $r$, which is going to be computed along
the run that will eventually loop back to state $(r, 1)$. Each case in
the definition above corresponds to producing part of the output described
by the register updates and querying previous register values accordingly.
Then to initialize the run, we simpy need to move to the rightmost position
and get into state $(r_0, -1)$; this is achieved by starting from the state
$\bullet$, which we take as the initial state. Then we set $(r_0, 1)$ to be
the unique final state.

Due to the fact that the register updates are copyless, we obtain that
the transitions are reversible and deterministic. The soundness of the
construction is easily derived by the invariant explained above, which can
be proven by induction.

As for planarity, the order on states will simply correspond to the lexicographic
  order over $R \times \{-1,1\}$, supplemented by stating that $\bullet$ is
  below all the other states. The monotonicity constraint on register updates
  then clearly ensures planarity of all transitions $\delta(a)$ for $a \in \Sigma$,
  and the extremal transitions $\delta(\triangleright)$
  (which only relates successive states in the order except for $\bullet$)
  and $\delta(\triangleleft)$ (that has two elements)
  are easily seen to be planar as well.
\end{proof}

\section{Some perspectives}

We hope to have demonstrated that the notions of planar deterministic/reversible automaton/transducer is an interesting
new way to capture star-free languages and first-order transductions.

As we explained in the \enquote{related work} section, this notion first arose~\citep{HinesPlanar} from the geometry of interaction semantics of non-commutative linear logic. Our interest in this semantics is part of a broader plan to link fragments
of the (linear) $\lambda$-calculus and automata theory that is exposed in~\citep{titoPhD}.
More specifically, we aim to improve the results of~\citep{aperiodic} -- characterizing star-free languages by means of a non-commutative $\lambda$-calculus --
to include first-order regular transductions using semantic methods.

Something worth investigating might be to extend the planarity condition to a
suitable notion of \emph{tree-walking automata} \citep[see e.g.][]{bojanczyk2008tree} and
 tree-walking
transducers outputting strings or trees. One natural question for instance
would be to ask whether reversible tree-to-tree walking transducers are also closed
under composition, and whether planarity would still be retained.

\paragraph{Acknowledgments} We would like to thank Sam van Gool, Denis Kuperberg, Paul-André Melliès and Noam Zeilberger for stimulating feedback on this work.

\bibliographystyle{plainnat}
\bibliography{bi}
\label{sec:biblio}

\end{document}